\providecommand{\todo}[1]{{\protect\color{red}\noindent {\bf [TODO]}\emph{#1} {\bf [/TODO]}}}
\newtheorem{theorem}{Theorem}[section]
\newtheorem{lemma}{Lemma}[section]
\newtheorem{definition}{Definition}[section]
\newtheorem{proposition}{Proposition}[section]
\newtheorem{example}{Example}[section]
\def\>{\ensuremath{\rangle}}
\def\<{\ensuremath{\langle}}
\def\lb{\ensuremath{\llbracket}}
\def\rb{\ensuremath{\rrbracket}}
\newcommand {\cH } {{\mathcal{H}}}
\newcommand {\sem}[1] {\llbracket#1\rrbracket}
\newcommand {\supp } {{\mathrm{supp}}}
\newcommand {\ass }{{\bf assert}}
\newcommand {\qbit}[2]{{|#1\>_{#2}\<#1|}}
\newcommand{\myAssertionName}{Proq}
\newcommand{\myAssertionNameSpace}{Proq }
\begin{document}
\title{\myAssertionName: Projection-based Runtime Assertions for Debugging on a Quantum Computer}

\author{Gushu Li}
\authornote{The first two authors contribute equally.}
\affiliation{
  \institution{University of California}
  \city{Santa Barbara}
  \country{USA}
}
\email{gushuli@ece.ucsb.edu}

\author{Li Zhou}
\authornotemark[1]
\affiliation{
  \institution{Max Planck Institute}
    \country{Germany}
}
\email{zhou31416@gmail.com}

\author{Nengkun Yu}
\authornote{Corresponding author: Nengkun Yu}
\affiliation{
  \institution{University of Technology, Sydney}
  \country{Australia}
}
\email{nengkunyu@gmail.com}
\author{Yufei Ding}
\affiliation{
  \institution{University of California}
  \city{Santa Barbara}
  \country{USA}
}
\email{yufeiding@cs.ucsb.edu}

\author{Mingsheng Ying}
\affiliation{
  \institution{University of Technology, Sydney}
  \country{Australia}
}
\affiliation{
  \institution{Institute of Software, CAS}
  \country{China}
}
\affiliation{
  \institution{Tsinghua University}
  \country{China}
}
\email{Mingsheng.Ying@uts.edu.au}
\author{Yuan Xie}
\affiliation{
  \institution{University of California}
  \city{Santa Barbara}
  \country{USA}
}
\email{yuanxie@ece.ucsb.edu}

\begin{abstract}
In this paper, we propose \myAssertionName, a runtime assertion scheme for testing and debugging quantum programs on a quantum computer. 
The predicates in \myAssertionName~are represented by projections (or equivalently, closed subspaces of the state space), following Birkhoff-von Neumann quantum logic. 
The satisfaction of a projection by a quantum state can be directly checked upon a small number of projective measurements rather than a large number of repeated executions.
On the theory side, we rigorously prove that checking projection-based assertions can help locate bugs or statistically assure that the semantic function of the tested program is close to what we expect, for both exact and approximate quantum programs.
On the practice side, we consider hardware constraints and introduce several techniques to transform the assertions, making them directly executable on the measurement-restricted quantum computers.
We also propose to achieve simplified  assertion implementation using local projection technique with soundness guaranteed.
We compare \myAssertionName~with existing quantum program assertions and demonstrate the effectiveness and efficiency of \myAssertionName~by its applications to assert two ingenious quantum algorithms, the Harrow-Hassidim-Lloyd algorithm and Shor's algorithm.

\end{abstract}

\maketitle

\thispagestyle{empty}

\section{Introduction}



Quantum computing is a promising computing paradigm with great potential in cryptography~\cite{shor1999polynomial}, database~\cite{grover1996fast}, linear systems~\cite{harrow2009quantum}, chemistry simulation~\cite{peruzzo2014variational}, etc.
Several quantum program languages~\cite{Qiskit, svore2018q, green2013quipper, paykin2017qwire, abhari2012scaffold, RigettiForest, GoogleCirq} 
have been published to write quantum programs for quantum computers.
One of the key challenges that must be addressed during quantum program development is  to compose correct quantum programs
since it is easy for programmers living in the classical world to make mistakes in the counter-intuitive quantum programming. 
For example, Huang and Martonosi~\cite{huang2019qdb,huang2019statistical} reported a few bugs found in the example programs from the ScaffCC compiler project~\cite{javadiabhari2015scaffcc}.
Bugs have also been found in the example programs in IBM's OpenQASM project~\cite{IBMopenqasm} and Rigetti's PyQuil project~\cite{Rigettipyquil}.
These erroneous quantum programs, written and reviewed by professional quantum computing experts, are sometimes even of very small size (with only 3 qubits)\footnote{We checked the issues raised in these projects' official GitHub repositories for this information.}.
Such difficulty in writing correct quantum programs hinders practical quantum computing.
Thus, 
effective and efficient quantum program debugging is naturally in urgent demand.

In this paper, we focus on runtime testing and debugging a quantum program on a quantum computer, and revisit \textit{assertion}, one of the basic program testing and debugging approaches, in quantum computing.
There have been two quantum program assertion designs in prior research.
Huang and Martonosi proposed statistical assertions, which employed statistical tests on classical observations~\cite{huang2019statistical} to debug quantum programs.
Motivated by indirect measurement and quantum error correction, Liu \textit{et al.} proposed a runtime assertion~\cite{liu2020quantum}, which introduces ancilla qubits to indirectly detect the system state. 
As early attempts towards quantum program testing and debugging, these assertion studies 
suffer from the following drawbacks:

1) \textbf{Limited applicability with classical style predicates:}
The properties of quantum program states can be much more complex than those in classical computing.
Existing quantum assertions~\cite{huang2019statistical,liu2020quantum}, which express the quantum program assertion predicates in a classical logic language, can only assert three types of quantum states. 
A lot of complex intermediate program states cannot be tested by these assertions due to their limited expressive power.
Hence, these assertions can only be injected at some special locations where the states are within the three supported types. 
Such restricted assertion types and injection locations will increase the difficulty in debugging as assertions may have to be injected far away from a bug.

2) \textbf{Inefficient assertion checking:}
A general quantum state cannot be duplicated~\cite{wootters1982single}, while the measurements, which are essential in assertions, usually only probe part of the state information 
and will destroy the tested state immediately.
Thus, an assertion, together with the computation before it, must be repeated for a large number of times to achieve a precise estimation of the tested state in Huang and Martonosi's assertion design~\cite{huang2019statistical}. 
Another drawback of the destructive measurement is that the computation after an assertion will become meaningless.
Even though multiple assertions can be injected at the same time, 
only one assertion could be inspected per execution, which will make the assertion checking more prolonged~\cite{huang2019statistical}.

3) \textbf{Lacking theoretical foundations:}
Different from a classical deterministic program, a quantum program has its intrinsic randomness and one execution may not cover all possible computations of even one specific input.
Moreover, some quantum algorithms (e.g., Grover's search~\cite{grover1996fast}, Quantum Phase Estimation~\cite{nielsen2010quantum}, qPCA~\cite{lloyd2014quantum}) are designed to allow approximate program states and the quantum program assertion checking itself is also probabilistic.
Consequently, testing a quantum program usually requires multiple executions for one program configuration.
It is important but rarely considered (to the best of our knowledge) what statistical information we can infer by testing those probabilistic quantum programs with assertions.
Existing quantum program assertion studies~\cite{huang2019statistical,liu2020quantum}, which mostly rely on empirical study, lack a rigorous theoretical foundation.

\textbf{Potential and problem of projections:} We observe that projection can be the key to address these issues due to its potential logical expressive power and unique mapping property.
The logical expressive power of projection operators comes from the quantum logic by Birkhoff and von Neumann back in 1936~\cite{birkhoff1936logic}.
The logical connectives (e.g., conjunction and disjunction) of projection operators can be defined by the set operations on their corresponding closed subspaces of a Hilbert space. 
Moreover, projections naturally match the projective measurement, which may not affect the measured state when the state is in one of its basis states~\cite{PhysRevA.89.042338}.
However, only those projective measurements with a very limited set of projections can be directly implemented on a quantum computer due to the physical constraints on the measurement basis and measured qubit count, impeding the full utilization of the logical expressive power of projections.

To overcome all the problems mentioned above and fully exploit the potential of projections, we propose \textbf{\myAssertionName}, a projection-based runtime assertion for quantum programs.
\textbf{First}, we employ projection operators to express the predicates in our runtime assertion.
The logical expressive power of projection-based predicates allows us to assert much more types of states and enable more flexible assertion locations.
\textbf{Second}, we define the semantics of our projection-based assertions by turning the projection-based predicates into corresponding projective measurements.
Then the measurement in our assertion will not affect the tested state if the state satisfies the assertion predicate. 
This property leads to more efficient assertion checking and enables multi-assertion per execution.
\textbf{Third}, we quantitatively evaluate the statistical properties of programming testing by checking projection-based assertions. We prove that the probabilistic quantum program assertion checking is statistically effective in locating bugs or assuring the expected program semantics under the tested input for not only exact quantum programs but also approximate quantum programs.
\textbf{Finally}, we consider the physical constraints on a quantum computer and introduce several transformation techniques, including \textit{additional unitary transformation}, \textit{combining projections}, and \textit{using auxiliary qubits}, to make all projection-based assertions executable on a measurement-restricted quantum computer.
We also propose \textit{local projection}, which is a sound simplification of the original projections, to relax the constraints in the predicates for simplified assertion implementations.

The major contributions of this paper can be summarized as follows:
\begin{enumerate}

    \item We, first the time, propose to use projection operators to design runtime assertions that have strong logical expressive power and can be efficiently checked on a quantum computer.
    \item On the theory side, we prove that testing quantum programs with projection-based assertions is statistically effective in debugging or assuring the program semantics for both exact and approximate quantum programs.
    \item On the practice side, we propose several assertion transformation techniques to simplify the assertion implementation and make our assertions physically executable on a measurement-restricted quantum computer.
    \item Both theoretical analysis and experimental results show that our assertion outperforms existing quantum program assertions~\cite{huang2019statistical,liu2020quantum}  with much stronger expressive power, more flexible assertion location, fewer executions, and lower implementation overhead. 
\end{enumerate}
\section{Preliminary}\label{sec:preliminary}

In this section, we introduce the necessary preliminary to help understand the proposed assertion scheme. 


\subsection{Quantum computing}
Quantum computing is based on quantum systems evolving under the law of quantum mechanics.
The state space of a quantum system is a Hilbert space (denoted by $\mathcal{H}$), a complete complex vector space with inner product defined.
A pure state of a quantum system is described by a unit vector $\ket{\psi}$ in its state space.
When the exact state is unknown, but we know it could be in one of some pure states $\ket{\psi_{i}}$, with respective probabilities $p_{i}$, where $\sum_{i}p_{i} = 1$,
a density operator $\rho$ can be defined to represent such a mixed state with $\rho = \sum_{i}p_{i}\qbit{\psi_{i}}{}$.
A pure state is a special mixed state.
Hence, in this paper, we adopt the more general density operator formulation most of the time since the state in a quantum program can be mixed upon branches and while-loops.

For example, a qubit (the quantum counterpart of a bit in classical computing) has a two-dimensional state space  $\mathcal{H}_{2}=\{a\ket{0}+b\ket{1}\}$, where $a,b \in \mathbb{C}$ and $\ket{0}$, $\ket{1}$ are two computational basis states.
Another commonly used basis is the Pauli-X basis, $\ket{+} = \frac{1}{\sqrt{2}}(\ket{0}+\ket{1})$ and $\ket{-} = \frac{1}{\sqrt{2}}(\ket{0}-\ket{1})$.
For a quantum system with $n$ qubits, the state space of the composite system is the tensor product of the state spaces of all its qubits: $\bigotimes_{i=1}^{n}\mathcal{H}_{i} = \mathcal{H}_{2^{n}}$. 
This paper only considers finite-dimensional quantum systems 
because realistic quantum computers only have a finite number of qubits.

There are mainly two types of operations performed on a quantum system, unitary transformation (also known as quantum gates) and quantum measurement.

\begin{definition}[Unitary transformation]
    A unitary transformation $U$ on a quantum system in the finite-dimensional Hilbert space $\mathcal{H}$ is a linear operator satisfying $UU^{\dagger}=I_\mathcal{H}$, where $I_\mathcal{H}$ is the identity operator on $\mathcal{H}$.
    
\end{definition}
After a unitary transformation, a state vector $\ket{\psi}$ or a density operator $\rho$ is changed to $U\ket{\psi}$ or $U\rho U^{\dagger}$, respectively.
We list the definitions of the unitary transformations used in the rest of this paper in Appendix~\ref{appendix:unitary}.

\begin{definition}[Quantum measurement]
    A quantum measurement on a quantum system in the Hilbert space $\mathcal{H}$ is a collection of linear operators $\{M_{m}\}$ satisfying $\sum_{m}M_{m}^{\dagger}M_{m}=I_\mathcal{H}$.
\end{definition}
After a quantum measurement on a pure state $\ket{\psi}$, an outcome $m$ is returned with probability $p(m) = \bra{\psi}M_{m}^{\dagger}M_{m}\ket{\psi}$ and then the state is changed to $\ket{\psi_{m}}=\frac{M_{m}\ket{\psi}}{\sqrt{p(m)}}$. 
Note that $\sum_{m}p(m) = 1$. For a mixed state $\rho$, the probability that the outcome $m$ occurs is $p(m)=tr(M_{m}^{\dagger}M_{m}\rho)$, and then the state will be changed to $\rho_m=\frac{M_{m}\rho M_{m}^{\dagger}}{p(m)}$.

\subsection{Quantum programming language}
For simplicity of presentation, this paper adopts the quantum \textbf{while}-language ~\cite{ying2011floyd} to describe the quantum algorithms. 
This language is purely quantum without classical variables but this selection will not affect the generality since the quantum \textbf{while}-language, which has been proved to be universal~\cite{ying2011floyd}, only keeps basic quantum computation elements that can be easily implemented by other quantum programming languages~\cite{Qiskit, svore2018q, green2013quipper, paykin2017qwire, abhari2012scaffold, RigettiForest, GoogleCirq}.
Thus, our assertion design and implementation based on this language can also be easily extended to other quantum programming languages 
\begin{definition}[Syntax \cite{ying2011floyd}]\label{q-syntax}
The quantum \textbf{while}-programs are defined by
the grammar:
\begin{align*}\label{syntax}S::=\ \mathbf{skip}\ & |\ S_1;S_2\ |\ q:=|0\rangle\ |\ \overline{q}:=U[\overline{q}]\\ 
&|\ \mathbf{if}\ \left(\square m\cdot M[\overline{q}] =m\rightarrow S_m\right)\ \mathbf{fi}\\
&| \ \mathbf{while}\ M[\overline{q}]=1\ \mathbf{do}\ S\ \mathbf{od}\end{align*}
\end{definition}

The language grammar is explained as follows. 
$q$ represents a quantum variable while $\overline{q}$ means a quantum register, which consists of one or more variables with its corresponding Hilbert space denoted by $\mathcal{H}_{\overline{q}}$.
$q:=|0\rangle$ means that quantum variable $q$ is initialized to be $|0\rangle$. $\overline{q}:=U[\overline{q}]$ denotes that a unitary transformation $U$ is applied to $\overline{q}$. 
Case statement $\mathbf{if}\cdots\mathbf{fi}$ means a quantum measurement $M$ is performed on $\overline{q}$ to determine which subprogram $S_m$ should be executed based on the measurement outcome $m$. 
The loop $\mathbf{while}\cdots\mathbf{od}$ means a measurement $M$ with two possible outcomes $0,1$ will determine whether the loop will terminate or the program will re-enter the loop body.

The \textit{semantic function} of a quantum \textbf{while}-program $S$ (denoted by $\lb S \rb$) is a mapping from the program input state to its output state after executing program $S$. 
For example, $\lb S \rb (\rho)$ represents the output state of program $S$ with input state $\rho$.
A formal and comprehensive introduction to the semantics of quantum \textbf{while}-programs can be found in~\cite{Ying16}.
    



\subsection{Projection and projective measurement}
One type of quantum measurement of particular interest is the projective measurement because all measurements that can be physically implemented on quantum computers are projective measurements.
We first introduce projections and then define the projective measurement.

For each closed subspace $X$ of $\mathcal{H}$, we can define a projection $P_X$. Note that every $\ket{\psi}\in\mathcal{H}$ ($\ket{\psi}$ does not have to be normalized) can be written as $|\psi\rangle=|\psi_X\rangle+|\psi_0\rangle$ with $|\psi_X\rangle\in X$ and $|\psi_0\rangle\in X^\bot$ (the orthocomplement of $X$). 
\begin{definition}[Projection] The projection  
    $P_X: \mathcal{H} \mapsto X$ is defined by $$P_X|\psi\rangle=|\psi_X\rangle$$ for every $|\psi\rangle\in\mathcal{H}$.  
\end{definition}
Note that $P$ is Hermitian ($P^{\dagger} = P$) and $P^{2} = P$. 
If a pure state $\ket{\psi}$ (or a mixed state $\rho$) is in the corresponding subspace of a projection $P$, we have $P\ket{\psi} = \ket{\psi}$ ($P\rho P=\rho$). 
There is a one-to-one correspondence between the closed subspaces of a Hilbert space and the projections in it. For simplicity, we do not distinguish a projection $P$ from its corresponding subspace. 
The \textbf{rank} of a projection $P$ (denoted by $\rank P$) is defined by the dimension of its corresponding subspace.

\begin{definition}[Projective measurement]
    A projective measurement $M$ is a quantum measurement in which all the measurement operators are projections ($0_{\mathcal{H}}$ is the zero operator on $\mathcal{H}$): 
    $$M = \{P_{m}\}, \text{where} \sum_{m} P_{m} = \mathcal{I_{H}}, P_{m}P_{n} = \begin{cases}
    P_{m} & \text{if~} m=n, \\
    0_{\mathcal{H}} & \text{otherwise.}
    \end{cases}$$
\end{definition}

 Note that if a state $\ket{\psi}$ (or $\rho$) is in the corresponding subspace of $P_{m}$, then a projective measurement with observed outcome $m$ will not change the state since:
$$\ket{\psi_{m}}=\frac{P_{m}\ket{\psi}}{\sqrt{\bra{\psi}P_{m}^{\dagger}P_{m}\ket{\psi}}} = \frac{\ket{\psi}}{\sqrt{\bra{\psi}\ket{\psi}}} = \ket{\psi}$$
$$ \left({\rm resp.~} \rho_m=\frac{P_{m}\rho P_{m}^{\dagger}}{\tr(P_{m}^{\dagger}P_{m}\rho)} = \frac{\rho}{\tr(\rho)} = \rho\right)
$$

\subsection{Projection-based predicates and quantum logic}
In addition to defining projective measurements, projection operators can also define the predicates in quantum programming.
We introduce the definition of projection-based predicates.

\begin{definition}[Projections-based predicates] Suppose $P$ is a projection operator on $\mathcal{H}$ and its corresponding closed subspace  is $X$. 
A state $\rho$ is said to satisfy a predicate $P$ (written $\rho\models P$) if $\supp(\rho) \subseteq X$,
where $\supp(\rho)$ is the subspace spanned by the eigenvectors of $\rho$ with non-zero eigenvalues.
Note that $\rho\models P \implies P\rho = \rho$.
    
\end{definition}



Some quantum algorithms (e.g., qPCA~\cite{lloyd2014quantum}) are not exact and their program states may only approximately satisfy a projection-based predicate.
We first introduce two metrics, trace distance $D$ and fidelity $F$, to evaluate the distance between two states. Then we define the approximate satisfactory of projection-based predicates.
\begin{definition}[Trace distance of states]
    For two states $\rho$ and $\sigma$, the trace distance $D$, which measures the ``distinguishability'' of two quantum states, between $\rho$ and $\sigma$ is defined as $$D(\rho,\sigma)=\frac{1}{2}tr|\rho-\sigma|$$ where  $tr|X|=tr\sqrt{X^{\dagger}X}$. Note that $0 \le D(\rho,\sigma) \le 1$ and $D(\rho,\sigma) = 0 \Leftrightarrow \rho = \sigma$.
\end{definition}
\begin{definition}[Fidelity]
      For two states $\rho$ and $\sigma$, the fidelity $F$, which measures the ``closeness'' of two quantum states, between $\rho$ and $\sigma$ is defined as
      $$F(\rho,\sigma) = tr\sqrt{\sqrt{\rho}\sigma\sqrt{\rho}}$$
      where $\sqrt{\rho}$ is the unique positive square root given by the spectral theorem. For example, suppose the spectrum decomposition of  $\rho$ is $\sum_{i}p_{i}\qbit{\psi_{i}}{}$, then $\sqrt{\rho} = \sum_{i}\sqrt{p_{i}}\qbit{\psi_{i}}{}$ (we have $p_{i} \ge 0$ since a state $\rho$ must be a positive semi-define operator.). 
      Note that $0 \le F(\rho,\sigma) \le 1$ and $F(\rho,\sigma) = 1 \Leftrightarrow \rho = \sigma$.
\end{definition}

\begin{definition}[Approximate satisfactory of projection-based predicates]
A state $\rho$ is said to approximately satisfy (projective) predicate $P$ with error parameter $\epsilon$, written $\rho\models_{\epsilon} P$ if 
there exists a $\sigma$ with the same trace such that $\sigma\models P$ and $D(\rho,\sigma)\le\epsilon$.
\end{definition}

In the rest of this paper,  all predicates are projection-based predicates and we do not distinguish a predicate $P$, a projection $P$, and its corresponding closed subspace $P$.
A quantum logic can be defined on the set of all closed subspaces of a Hilbert space~\cite{birkhoff1936logic}. 

\begin{definition}[Quantum logic on the projections~\cite{birkhoff1936logic}] Suppose $\mathcal{S}(\mathcal{H})$ is the set of all closed subspaces of Hilbert space $\mathcal{H}$. Then $(\mathcal{S}(\mathcal{H}), \wedge, \vee, ^\bot)$ is an orthomodular lattice (or quantum logic). For any $P, Q \in \mathcal{S}(\mathcal{H})$, we define:
$$P \wedge Q = P \cap Q, \ P \vee Q = \overline{{\rm span}(P \cup Q)}$$
$$ \ P^\bot = \{\ket{\psi}\in \mathcal{H}: \bra{\psi}P\ket{\psi} = 0\}$$
where ${\rm span}(P)$ is the subspace spanned by $P$ and $\overline{P}$ is the closure of $P$. 
That is, in this quantum logic, the logic operations on the predicates are defined by  the set operations on their corresponding subspaces.
\end{definition}


\subsection{Measurement-restricted quantum computer}\label{sec:measurement-restriction}
Although projective measurement has restricted all the measurement operators to be projection operators, 
most quantum computers which run on the well-adopted quantum circuit model~\cite{nielsen2010quantum} usually have more restrictions on the measurement. 

First, they only support projective measurement in the \textbf{computational basis}. 
That is, only projective measurements with a specific set (which only contains all the computational basis states) of projection operators can be physically implemented.
For example, such a projective measurement on $n$ qubits can be 
described as   
$M=\{P_{t}\}$, where $P_t=|t\rangle\langle t|$ is the projection onto the $1$-dimensional subspace spanned by the basis state $|t\rangle$, and $t$ ranges over all $n$-bit strings; in particular, for a single qubit, this measurement is simply $M=\{P_0,P_1\}$ with $P_0=|0\rangle\langle 0|$ and $P_1=|1\rangle\langle 1|$. 

Second, only projective measurements with projection operators of \textbf{special ranks} can be physically implemented.
Suppose we have an $n$-qubit program with a $2^n$-dimensional state space. 
After we measure one qubit, the state of that qubit will collapse to one of its basis states. 
The overall state space is reduced by half and becomes a $2^{n-1}$-dimensional 
space.
A projection $P$ with $\rank P = 2^{n-1}$ can be implemented by measuring one qubit.
If $k$ qubits are measured, the remaining space will have $2^{n-k}$ dimensions, and projections with $\rank P = 2^{n-k}$ can be implemented by measuring $k$ qubits.
In reality, we can only measure an integer number of qubits but cannot measure a fraction number of qubits.
For an $n$-qubit system, we can measure $\{1,2,\cdots, n\}$ qubits so that only projections with $\rank P \in \{2^{n-1},2^{n-2},\cdots, 1\}$ can be directly implemented.

\section{Projection-based assertion: design and theoretical foundations}\label{sec:debugging}

The goal of this paper is to provide a design of assertions which the programmers can insert in their quantum programs when testing and debugging their programs on a quantum computer.
In particular, our design aims to achieve two objectives:
\begin{enumerate}
    \item The assertions should have strong logical expressive power and can be efficiently checked. 
    \item The assertions should be executable on a quantum computer with restricted measurements.
\end{enumerate}

In this section, we will focus on the first objective and introduce how to design quantum program assertions based on projection operators.
We first discuss the reasons why projections are suitable for expressing predicates in a quantum program assertion.
Then we formally define the syntax and semantics of a new projection-based $\bf assert $ statement.
Finally, we rigorously formulate the theoretical foundations of program testing and debugging with projection-based assertions.
We prove that running the assertion-injected program repeatedly can narrow down the potential location of a bug or assure that the semantics of the original program is close to what we expect.

\subsection{Checking the satisfactory of a projection-based predicate}

An assertion is a predicate at a point of a program.
The key point of designing assertions for quantum programs is to first determine how to express predicates in the quantum scenario.
Projection-based predicates has been used widely in static analysis and logic for quantum programming.
For the first time, we employ projection-based predicates in runtime assertions for two reasons.

\begin{figure}
	\centering
	\includegraphics[width=\columnwidth]{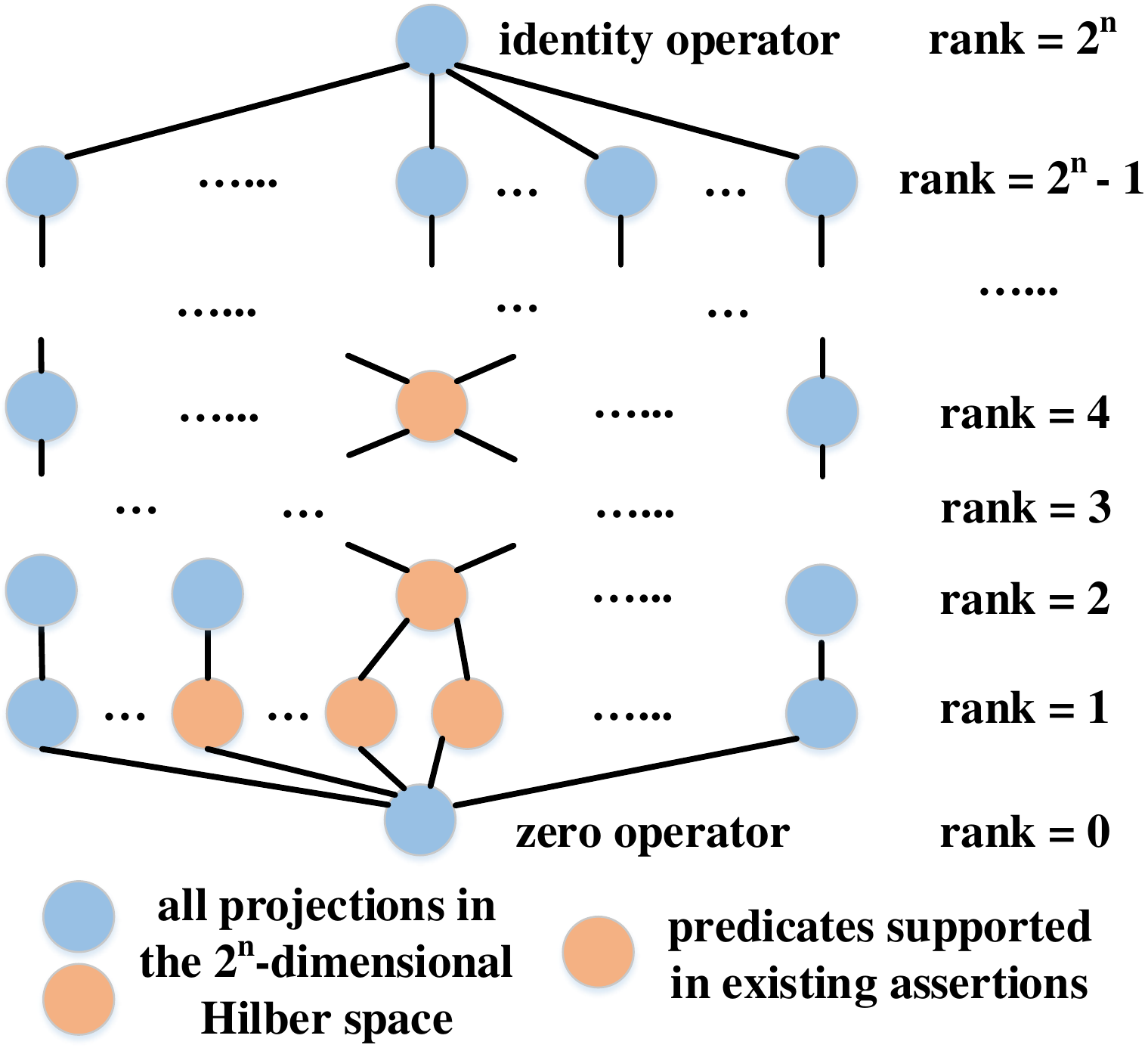}
	\caption{Logical expressive power comparison}\label{fig:coverage}
	
\end{figure}

\textbf{Strong logical expressive power:} 
Figure~\ref{fig:coverage} shows the orthomodular lattice based on all projections in a $2^{n}$-dimensional Hilbert space and compares the logical expressive power of the predicates in existing assertions and the projections.
All predicates expressed using a classical logical language in existing quantum program assertions~\cite{huang2019statistical,liu2020quantum} can be represented by very few elements of special ranks in this lattice (detailed discussion is in Section~\ref{sec:coverage}).
But projections can naturally cover all elements in Figure~\ref{fig:coverage}.
Therefore, projections have a much stronger expressive power compared with the classical logical language used in existing quantum assertions.


\textbf{Efficient runtime checking:}
A quantum state $\rho$ can be efficiently checked by a projection $P$ because $\rho$ will not be affected by the projective measurement with respect to $P$ if it is in the subspace of $P$.
We can construct a projective measurement $M = \{M_{\bf true}=P,  M_{\bf false}=I-P\}$.
When $\rho$ is in the subspace of $P$, the outcome of this projective measurement is always ``true'' with probability of $1$ and the state is still $\rho$.
Then we know that $\rho$ satisfies $P$ without changing the state.
When $\rho$ is not in the subspace of $P$, which means that $\rho$ does not satisfy $P$, the probability of outcome ``true'' or ``false'' in the constructed projective measurement is $tr(P\rho)$ or $1-tr(P\rho)$, respectively.
Suppose we perform such procedure $k$ times, the probability that we do not observe any ``false'' outcome is $tr(P\rho)^{k}$. 
Since $tr(P\rho)<1$, this probability approaches 0 very quickly when $tr(P\rho)$ is not close to $1$ and we can conclude if $\rho$ satisfies $P$ with high certainty within very few executions.
Moreover, even if the state $\rho$ is not in the subspace of $P$, the projective measurement with outcome ``true'' will change the incorrect state $\rho$ to a correct state that is in the subspace of $P$ so that the following execution after the assertion is still valid.

When $tr(P\rho)$ is very close to $1$ but not equal to $1$, we have the following two cases. 
\textbf{First}, the program itself has some real bugs that make the program states very close to what we expect.
It is almost impossible to prove that no such bugs ever exist in reality. 
However, we have checked and confirmed that all types of bugs reported by Huang and Martonosi~\cite{huang2019qdb} (the only systematic report about bugs in real quantum programs to the best of our knowledge) can make $tr(P\rho)$ significantly smaller than $1$.
Therefore, checking a projection-based predicate is effective for these known quantum program bugs.
Moreover, if the output state of a program is very close to the correct one, the probability that we can observe the correct final result from such `small-error' states is still close to the probability that we can obtain the correct result from a totally correct output state.
The bug is not severe in this sense.
\textbf{Second}, the program itself is not an exact quantum program and its correct program states are supposed to only approximately satisfy the predicates. We will prove that projection-based assertions can still test and debug such approximate quantum programs later in Section~\ref{sec:robust}.

\subsection{Assertion statement: syntax and semantics}

We have demonstrated the advantages of using projections as predicates. 
Now we add a new runtime assertion statement to the quantum \textbf{while}-language grammar. 
 
\begin{definition}[\textbf{Syntax} of the assertion] 
The \textbf{syntax}  of the quantum assertion is defined as:
	\begin{align*}
	    \ass(\overline{q};P)
	\end{align*} where $\overline{q}=q_1,...,q_n$ is a collection of quantum variables and $P$ is a projection in the state space $\cH_{\overline{q}}$.
\end{definition}

As the original quantum \textbf{while}-language is already universal, we define the semantics of the new assertion statement using the quantum \textbf{while}-language.
An auxiliary notation $\mathbf{abort}$ is employed to denote that the program terminates immediately and reports the termination location.

\begin{definition}[\textbf{Semantics}]\label{definition:semantics}
The \textbf{semantics} of the new assertion statement is defined as
\begin{align*}
    \ass(\overline{q};P)\equiv & \ \mathbf{if}\ M_{P}[\overline{q}] =m_0\rightarrow \ \mathbf{skip} \\
    &\ \square \hspace{1.4cm} m_1\rightarrow \ \mathbf{abort} \\
    &\ \mathbf{fi}
\end{align*}
 where $M_{P} = \{M_{m_0} = P, M_{m_1} = I_{\cH_{\overline{q}}}-P\}$.
\end{definition}

The semantics of the assertion statement is explained as follows:
We construct a projective measurement $M_{P} = \{M_{m_0} = P, M_{m_1} = I_{\cH_{\overline{q}}}-P\}$ based on the projection operator $P$ in the assertion.
We apply this measurement of the corresponding qubit collection $\overline{q}$.
If the measurement result is $m_0$, which means that the tested state is in the closed subspace of $P$, then we continue the execution of program without doing anything because the tested state satisfies the predicate in the assertion.
If the measurement result is $m_1$, which means the tested state is not in the closed subspace of $P$, the program will terminate and report the termination location.
Then we can know that the state at this location does not satisfy the corresponding predicate.


\subsection{Statistical effectiveness of testing and debugging with projection-based assertions}
As with classical program testing, quantum program testing can show the presence of bugs, lowering the risking of remaining bugs, but cannot assure the behavior of all possible computation. 
One testing execution cannot even check the program behavior thoroughly for one input due to the intrinsic randomness of quantum systems.
Therefore, multiple executions are required to test a quantum program with one input.
In this section, we show that, for a program with projection-based assertions and one specific input, running it repeatedly for enough times can locate bugs or statistically assure the behavior of the program under the specific input with high confidence.


We consider a quantum program $S$. 
When the programmers try to test a program with assertions, multiple assertions could be injected so that a potential bug could be revealed as early as possible. 
Suppose we insert $l$ assertions whose predicates are $P_1, P_2, \dots, P_{l}$ ($P_l$ is the predicate for the final state).
We define that a bug-free standard program $S_{\rm std}$ is a program that can satisfy all the predicates throughout the program. 
We will show that after running the program with assertion inserted for a couple of times, we can locate the incorrect program segment if an error message occurs or conclude that output of the tested program $S$  and the standard program $S_{\rm std}$ (under a specific input $\rho$) is close.
We first formally define a debugging scheme for a quantum program.



\begin{definition}
	\label{debuggingscheme}
	A debugging scheme for $S$ is a new program $S^\prime$ with assertions being added between consecutive subprograms $S_i$ and $S_{i+1}$:
	\begin{align*}
	S^\prime\equiv\ 
	& S_1; \ \ass(\overline{q}_1;P_1); \\
	& S_2; \ \ass(\overline{q}_2;P_2); \\
	&\cdots; \\
	& S_{l-1}; \ \ass(\overline{q}_{l-1};P_{l-1}); \\
	& S_l; \ \ass(\overline{q}_l;P_{l})
	\end{align*}
	where $\overline{q}_i$ is the collection of quantum variables and $P_i$ is a projection on $\cH_{\overline{q}_i}$ for all $0 < i\le l$.
\end{definition}

A program segment $S_{i}$ is considered to be correct if its output satisfies the predicate $P_{i}$ when its input satisfied $P_{i-1}$ as specified by the assertions.
We show that running the program $S^{\prime}$ (defined in Definition~\ref{debuggingscheme}) with assertions injected could effectively check the program by proving that the tested program $S$ and a standard program $S_{std}$ will have a similar semantic function under the tested input state. 
A quantitative and formal description of the effectiveness of our debugging scheme is illustrated by the following theorem.

\begin{theorem}[Effectiveness of debugging scheme]\label{theorem:feasibility}
Suppose we repeatedly execute $S^\prime$ (with $l$ assertions) with input $\rho$ and collect all the error messages.
\begin{enumerate}
    \item If an error message occurs in $\ass(\overline{q}_i;P_i)$, then subprogram $S_{i}$ is not correct, i.e., with the input satisfying  precondition $P_{i-1}$, after executing $S_{i}$, the output can violate postcondition $P_i$.
    
    \item If no error message is reported after executing $S^{\prime}$ for $k$ times ($k \gg l^2$), program $S$ is close to the bug-free standard program; more precisely, with confidence level $95\%$, 
    \begin{enumerate}
    	\item the confidence interval of $\min_{S_{\rm std}}D\left(\sem{S}(\rho),\sem{S_{\rm std}}(\rho)\right)$ is
    	$\left[0,\frac{0.9l+\sqrt{l}}{\sqrt{k}}\right],$
    	\item the confidence interval of $\max_{S_{\rm std}}F\left(\sem{S}(\rho),\sem{S_{\rm std}}(\rho)\right)$ is
    	$\left[\cos\frac{0.9l+\sqrt{l}}{\sqrt{k}},1\right],$
    \end{enumerate} 
	where the minimum (maximum) is taken over all bug-free standard programs $S_{\rm std}$ that satisfy all assertions with input $\rho$.
\end{enumerate}
Moreover, within one testing execution, if the program $s_{m}$ is not correct but $\ass(\overline{q}_{m};P_{m})$ is passed, then follow-up assertion $\ass(\overline{q}_{m+1};P_{m+1})$ is still effective in checking the program $S_{m+1}$.
\end{theorem}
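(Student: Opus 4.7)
My plan is to separate the three claims, since Parts~1 and the ``moreover'' clause are direct operational arguments about the assertion semantics (Definition~\ref{definition:semantics}), while Part~2 is the genuinely statistical part that needs the bulk of the work.

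For Part~1, I would argue directly from the definition of $\ass(\overline{q}_i;P_i)$: the $\mathbf{abort}$ branch is selected only when the $m_1$ outcome of $M_{P_i}$ occurs, which has positive probability iff $\tr((I-P_i)\rho_i)>0$, where $\rho_i$ is the state just before the assertion. Since the previous assertion passed in the same run, the projective measurement at step $i-1$ collapsed the state into $P_{i-1}$, so $S_i$ was handed an input satisfying the precondition $P_{i-1}$ and yet produced an output violating $P_i$. That is precisely non-correctness of $S_i$. The ``moreover'' clause is the same observation run forward: after $\ass(\overline{q}_m;P_m)$ passes, the residual (normalized) state is $P_m\rho_m P_m/\tr(P_m\rho_m)$, which by idempotence of $P_m$ satisfies the predicate $P_m$. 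So even if $S_m$ was buggy, the post-assertion state is a legitimate input for testing whether $S_{m+1}$ maps $P_m$ into $P_{m+1}$, so $\ass(\overline{q}_{m+1};P_{m+1})$ remains meaningful.

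For Part~2, my plan is to combine three ingredients: (i) a Clopper--Pearson style (``rule of three'') confidence interval, (ii) the gentle measurement lemma, and (iii) additivity of the Bures angle. Let $p_i$ be the conditional pass probability of the $i$-th assertion given all earlier ones passed, and set $\epsilon_i=1-p_i$. Observing $k$ successful runs out of $k$, the one-sided $95\%$ upper confidence bound on the overall failure probability $1-\prod_i p_i$ is $1-(0.05)^{1/k}$, hence $\sum_i\epsilon_i\le-\ln\prod_i p_i\le-\ln(0.05)/k\approx\ln(20)/k$. I then take as the standard program $S_{\rm std}$ the one obtained by inserting the (successful-branch) projections inside $S$, i.e.\ $S_1;\mathcal{P}_1;S_2;\mathcal{P}_2;\cdots;S_l;\mathcal{P}_l$ where $\mathcal{P}_i(\sigma)=P_i\sigma P_i/\tr(P_i\sigma)$ (suitably completed to a trace-preserving channel). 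This $S_{\rm std}$ tautologically satisfies every assertion on input $\rho$, so it is a legitimate bug-free standard.

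The bound then follows by a telescoping/triangle argument. At each stage gentle measurement gives
\begin{equation*}
F\!\left(\rho_i,\mathcal{P}_i(\rho_i)\right)\ge\sqrt{1-\epsilon_i},\qquad
D\!\left(\rho_i,\mathcal{P}_i(\rho_i)\right)\le\sqrt{\epsilon_i},
\end{equation*}
and, because the subsequent subprograms $S_{i+1},\ldots,S_l$ are quantum channels, trace distance is non-increasing under them and Bures angle $A:=\arccos F$ is sub-additive across composition. Hence $D(\sem{S}(\rho),\sem{S_{\rm std}}(\rho))\le\sum_i\sqrt{\epsilon_i}$ and $A(\sem{S}(\rho),\sem{S_{\rm std}}(\rho))\le\sum_i\arcsin\sqrt{\epsilon_i}$, which I would finally upper bound using Cauchy--Schwarz on $\sum_i\sqrt{\epsilon_i}\le\sqrt{l\sum_i\epsilon_i}$ together with a per-assertion worst-case term; combining with $\sum_i\epsilon_i\le\ln(20)/k$ yields expressions of the form $\tfrac{cl+\sqrt{l}}{\sqrt{k}}$ and $\cos(\tfrac{cl+\sqrt{l}}{\sqrt{k}})$ claimed in the statement.

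The main obstacle I anticipate is pinning down the numerical constant (the ``$0.9$'' in $0.9l+\sqrt{l}$). The $\sqrt{l}$ piece falls out of Cauchy--Schwarz applied to the aggregate statistical bound, but the linear-in-$l$ piece requires a per-assertion bound, which in turn demands either a Bonferroni correction across the $l$ assertions or a sharper tail analysis of the product pass-probability; balancing these two estimates, and checking that the resulting coefficient tightens to $0.9$ under the assumption $k\gg l^2$, is where the careful accounting lives. Once those constants are fixed, assembling the final confidence intervals for $D$ and $F$ is mechanical.
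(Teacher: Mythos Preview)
Your argument for Part~1 and for the ``moreover'' clause is exactly the paper's: the pass of $\ass(\overline{q}_{i-1};P_{i-1})$ collapses the state into $P_{i-1}$, so an error at step $i$ witnesses a violation of $\{P_{i-1}\}S_i\{P_i\}$; and after a pass at step $m$ the residual state lies in $P_m$, so the next assertion still tests $S_{m+1}$ on a legitimate input. Your construction of the bug-free standard program by interleaving the success projections is also the paper's construction.

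For Part~2 you have all the same ingredients (Clopper--Pearson, Lemma~\ref{lemma:gentle}, telescoping via contractivity of channels for both $D$ and the Bures angle $A=\arccos F$), but you assemble the final bound differently, and this difference is exactly where your uncertainty about the $0.9$ lives. The paper does \emph{not} apply Clopper--Pearson to the aggregate pass probability and then Cauchy--Schwarz. Instead it treats each $\epsilon_m$ individually as a ${\rm Beta}(1,k)$ ``posterior'' (zero failures in $k$ trials), computes for $Y_m=\epsilon_m+\sqrt{\epsilon_m(1-\epsilon_m)}$ a \emph{center estimate} (its mean under ${\rm Beta}(1,k)$, namely $\tfrac{1}{k+1}+\sqrt{\pi/(4k+3)}\approx\sqrt{\pi}/(2\sqrt{k})$) and a CI half-width, and then combines the $l$ independent contributions by summing the centers and taking the root-sum-square of the widths. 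The $0.9l$ is thus $l\cdot\sqrt{\pi}/2\approx0.886\,l$ from the summed centers, and the $\sqrt{l}$ is the RSS of the widths; it is this ``mean plus RSS'' bookkeeping that produces the specific form $0.9l+\sqrt{l}$. Your aggregate route $\sum_i\epsilon_i\lesssim\ln(40)/k$ plus Cauchy--Schwarz gives only a $c\sqrt{l}/\sqrt{k}$ upper bound, with no linear-in-$l$ term at all; that is actually \emph{tighter} than the stated interval and hence still proves the theorem, but it will not reproduce the paper's exact expression, and no amount of Bonferroni correction will manufacture the constant $0.9$.
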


\begin{proof}
Postponed to Appendix~\ref{proof:theorem:feasibility}.
\end{proof}

By Theorem~\ref{theorem:feasibility}, we conclude that we can use projection-based assertions to test a quantum program and find the locations of potential bugs with the proposed debugging scheme.
When an error message occurs in $\ass(\overline{q}_i;P_i)$, we can know that there is at least one bug in the program segment $S_{i}$.
Although we could not directly know how the bug happens nor repair a bug, our approach can help with debugging in practice, by narrowing down the potential location of a bug from the entire program to one specific program segment.
After applying the proposed debugging scheme, programmers can manually investigate the target program segment to finally find the bug more quickly without searching in the entire program.
If we could not have any error message after running the assertion checking program $S^{\prime}$ for a sufficiently large number of times, we can conclude that the semantics of the original program $S$ for the tested input is at least close to what we expected (specified by the assertions) with high confidence.

\textbf{Only one input tested:} 
It can be noticed that only one input is tested when using the proposed debugging scheme in Theorem~\ref{theorem:feasibility}.
However, in classical program testing, we usually prepare a large number of testing cases to increase the testing thoroughness.
Here we argue that considering one input is already useful in testing many quantum programs
because the input information of many practical quantum algorithms (e.g., Shor's algorithm~\cite{shor1999polynomial}, Grover algorithm~\cite{grover1996fast}, VQE algorithm~\cite{peruzzo2014variational}, HHL algorithm~\cite{harrow2009quantum}) are only encoded in the operations and the input state is always a trivial state $\ket{00\cdots00}$.
Consequently, we do not need to check different inputs when testing these quantum algorithms. 
Checking for one specific input $\rho = \qbit{00\cdots00}{}$ will be sufficient. 

\subsection{Testing and debugging approximate quantum programs}\label{sec:robust}
We have shown that projection-based assertions can be used to check exact quantum programs but there are also other quantum algorithms (e.g., qPCA~\cite{lloyd2014quantum}, Grover's search~\cite{grover1996fast}, Quantum Phase Estimation~\cite{nielsen2010quantum}) of which the correct program states sometimes only approximately satisfy a projection.
We generalize Theorem~\ref{theorem:feasibility} by adding error parameters on all the program segments to represent the approximation throughout the program, and prove that we can still locate bugs or conclude about the semantics of the tested program with high confidence by checking projection-based assertions.



We first study how much a state $\rho$ is changed after a projective measurement by proving a special case of the gentle measurement lemma~\cite{Winter99} with projections.
The result is slightly stronger than the original one~\cite{Winter99} under the constraint of projection.

\begin{lemma}[Gentle measurement with projections]\label{lemma:gentle}
	For projection $P$ and density operator $\rho$, if $\tr(P\rho)\ge1-\epsilon$, then we have
{\normalfont (1)} $D\Big(\rho,\frac{P\rho P}{\tr(P\rho P)}\Big)\le\epsilon+\sqrt{\epsilon(1-\epsilon)}$, and 
	 {\normalfont (2)}
	$F\Big(\rho,\frac{P\rho P}{\tr(P\rho P)}\Big)\ge\sqrt{1-\epsilon}.$
\end{lemma}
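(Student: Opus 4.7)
My plan is to handle the two parts with different tools: Uhlmann's purification theorem for the fidelity bound, and a direct orthogonal block decomposition together with a Cauchy-Schwarz-type trace-norm inequality for the trace distance bound.

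For part (2), let $|\Psi\rangle_{AB}$ be any purification of $\rho$ on an enlarged system. Applying $P\otimes I_B$ to $|\Psi\rangle$ gives an unnormalized vector whose partial trace on $B$ is $P\rho P$. Since $\tr(P\rho P)=\tr(P^{2}\rho)=\tr(P\rho)\ge 1-\epsilon$ (using $P^{2}=P$), the normalized state $|\Phi\rangle=(P\otimes I)|\Psi\rangle/\sqrt{\tr(P\rho)}$ is a purification of $P\rho P/\tr(P\rho P)$. Uhlmann's theorem then yields
\[
F\!\left(\rho,\tfrac{P\rho P}{\tr(P\rho P)}\right)\ \ge\ |\langle\Psi|\Phi\rangle|\ =\ \frac{\langle\Psi|(P\otimes I)|\Psi\rangle}{\sqrt{\tr(P\rho)}}\ =\ \sqrt{\tr(P\rho)}\ \ge\ \sqrt{1-\epsilon}.
\]
This immediately gives (2).

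For part (1), let $Q=I-P$ and write $p=\tr(P\rho)\ge 1-\epsilon$. The identity $\rho=P\rho P+P\rho Q+Q\rho P+Q\rho Q$ gives the decomposition
\[
\rho-\frac{P\rho P}{p}\ =\ \Bigl(1-\frac{1}{p}\Bigr)P\rho P\ +\ P\rho Q\ +\ Q\rho P\ +\ Q\rho Q .
\]
I would then bound the trace norm of each of the four pieces separately. The diagonal pieces are easy: $\|(1-1/p)P\rho P\|_{1}=(1/p-1)p=1-p\le\epsilon$, and $\|Q\rho Q\|_{1}=\tr(Q\rho)=1-p\le\epsilon$. For the two off-diagonal pieces I would factor $\rho=\rho^{1/2}\rho^{1/2}$ and invoke the matrix Hölder inequality $\|AB\|_{1}\le\|A\|_{2}\|B\|_{2}$, giving $\|P\rho Q\|_{1}\le\sqrt{\tr(P\rho)\tr(Q\rho)}=\sqrt{p(1-p)}\le\sqrt{\epsilon(1-\epsilon)}$ (the last step uses that $x(1-x)$ is decreasing on $[1/2,1]$, so the case $\epsilon\le 1/2$ forces $p\in[1-\epsilon,1]$ and thus $p(1-p)\le(1-\epsilon)\epsilon$; for $\epsilon\ge 1/2$ the claim $D\le 1\le\epsilon+\sqrt{\epsilon(1-\epsilon)}$ is trivial since trace distance is always bounded by $1$). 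Summing the four bounds and multiplying by $1/2$ yields $D\le\epsilon+\sqrt{\epsilon(1-\epsilon)}$.

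The main technical obstacle is the off-diagonal term $\|P\rho Q\|_{1}$: an operator norm or naive triangle bound gives only something like $\sqrt{\epsilon}$ times a dimensional or operator-norm factor that is too weak. The correct move is the symmetric factorization $P\rho Q=(P\rho^{1/2})(\rho^{1/2}Q)$ followed by the matrix Hölder inequality, which converts the cross term into a clean geometric mean $\sqrt{p(1-p)}$. Everything else is arithmetic and the usual triangle inequality.
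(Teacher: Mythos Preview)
Your proposal is correct. For part~(1) you follow the same block decomposition as the paper, but you bound the off-diagonal term $\|P\rho Q\|_{1}$ via the factorization $P\rho Q=(P\rho^{1/2})(\rho^{1/2}Q)$ together with the matrix H\"older inequality $\|AB\|_{1}\le\|A\|_{2}\|B\|_{2}$, whereas the paper first computes $\tr\bigl|P|\psi\rangle\langle\psi|P^{\perp}\bigr|$ exactly for pure states and then passes to mixtures by the triangle inequality followed by Cauchy--Schwarz on the eigenvalue weights. Both routes land on the same key estimate $\|P\rho Q\|_{1}\le\sqrt{\tr(P\rho)\tr(Q\rho)}$; yours is a bit more direct and avoids the intermediate pure-state calculation. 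For part~(2) the approaches genuinely diverge: you invoke Uhlmann's theorem with the purification $(P\otimes I)|\Psi\rangle/\sqrt{\tr(P\rho)}$, which immediately yields $F\ge\sqrt{\tr(P\rho)}$; the paper instead computes the fidelity exactly for pure states and then applies strong concavity of the fidelity to a spectral decomposition of $\rho$. Your Uhlmann argument is shorter and conceptually cleaner, while the paper's approach has the minor advantage of never leaving the original Hilbert space. You are also more careful than the paper in handling the regime $\epsilon\ge 1/2$ in part~(1), where the monotonicity of $p(1-p)$ is no longer automatic and the fallback $D\le 1\le\epsilon+\sqrt{\epsilon(1-\epsilon)}$ is needed.
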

\begin{proof}
Postponed to Appendix~\ref{proof:lemma}.
\end{proof}
Suppose a state $\rho$ satisfies $P$ with error $\epsilon$, 
then $\tr(P\rho)\ge1-\epsilon$ which ensures that, applying the projective measurement $M_P = \{M_{\bf true} = P,\ M_{\bf false} = I-P\}$, we have the outcome ``true'' with probability at least $1-\epsilon$. Moreover, if the outcome is ``true'' and $\epsilon$ is small, the post-measurement state $\frac{P\rho P}{\tr(P\rho P)}$ is close to the original state $\rho$ in the sense that their trace distance is at most $\epsilon+\sqrt{\epsilon(1-\epsilon)}$.


Consider a program $S = S_1;S_2;\cdots;S_l$ with $l$ inserted assertions $\ass(\overline{q}_m,P_m)$ after each segments $S_m$ for $1\le m\le l$. Unlike the exact algorithms, here each program segment $S_m$ is considered to be correct if its input satisfies $P_{m-1}$, then its output approximately satisfies $P_m$ with error parameter $\epsilon_m$.
The following theorem states that the debugging scheme defined in Definition \ref{debuggingscheme} is still effective for approximate quantum programs.

\begin{theorem}[Effectiveness of debugging approximate quantum programs]\label{theorem:robust}

Assume that all $\epsilon_m$ are small ($\epsilon_m \ll 1$). 
Execute $S^\prime$ for $k$ times ($k \gg l^2$) with input $\rho$, and we count $k_m$ for the occurrence of error message for assertion $\ass(\overline{q}_m, P_m)$. 
\begin{enumerate}
    \item The 95\% confidence interval of real $\epsilon_m$ is $[w_m^-,w_m^+]$. Thus, with confidence 95\%, if $\epsilon_m<w_m^-$, $S_m$ is incorrect; and if $\epsilon_m>w_m^+$, we conclude $S_m$ is correct. Here, $w_m^-, w_m^+$ and $w_m^c$ are $B\left(\alpha, k_m+1, k - \sum_{i=1}^{m}k_i\right)$ with $\alpha = 0.025,0.975$ and $0.5$ respectively, where $B(P,A,B)$ is the $P$th quantile from a beta distribution with shape parameters $A$ and $B$.
    
    \item If no segment appears to be incorrect, i.e., all $\epsilon_m\ge w_m^-$, then after executing the original program $S$ with input $\rho$, the output state $\sigma$ approximately satisfies $P_l$ with error parameter $\delta$, i.e., $\sigma\models_\delta P_l$, where $\delta = \sum_{m=1}^l\sqrt{w_m^c} + \sqrt{\sum_{m=1}^l(\sqrt{w_m^+}-\sqrt{w_m^c})^2}$.

\end{enumerate}
\end{theorem}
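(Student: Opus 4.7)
The plan is to prove the two parts in sequence: Part 1 is a binomial-proportion confidence interval argument, while Part 2 requires chaining Lemma~\ref{lemma:gentle} across the $l$ program segments and then combining with the CIs from Part 1.

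For Part 1, I would first note that the approximate-satisfaction $\rho_m \models_{\epsilon_m} P_m$ yields $\tr(P_m \rho_m) \ge 1 - \epsilon_m$: picking any witness $\sigma \models P_m$ with $D(\rho_m, \sigma) \le \epsilon_m$, the operational characterisation of trace distance as the maximum of $|\tr(M(\rho_m-\sigma))|$ over $0 \le M \le I$ gives $\tr(P_m\sigma) - \tr(P_m\rho_m) \le \epsilon_m$, and $\tr(P_m\sigma) = 1$. Consequently, the probability that $\ass(\overline{q}_m, P_m)$ aborts, conditional on all earlier assertions passing, is at most $\epsilon_m$. Among the $k$ executions of $S'$, assertion $m$ is reached $k - \sum_{i<m} k_i$ times, and $k_m$ counts the aborts among these (conditionally independent) Bernoulli trials. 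Applying the Clopper-Pearson construction, which is exactly the beta-quantile $95\%$ CI for a binomial proportion with $k_m$ successes out of $k - \sum_{i<m} k_i$ trials, yields the stated $[w_m^-, w_m^+]$, and the decision rules for declaring $S_m$ correct or incorrect follow directly.

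For Part 2, I would set up parallel sequences of states tracking the two programs. Let $\rho_0 = \sigma_0 = \rho$ and for $1 \le m \le l$ define
\[
\rho_m = \sem{S_m}(\rho_{m-1}), \qquad \tilde{\sigma}_m = \sem{S_m}(\sigma_{m-1}), \qquad \sigma_m = \frac{P_m \tilde{\sigma}_m P_m}{\tr(P_m \tilde{\sigma}_m)}.
\]
Then $\rho_l = \sem{S}(\rho) = \sigma$, while $\sigma_l \models P_l$ exactly (it is produced by a projection onto $P_l$). Approximate correctness of $S_m$ under the conditioning established by the previous assertion gives $\tilde{\sigma}_m \models_{\epsilon_m} P_m$, so $\tr(P_m\tilde{\sigma}_m) \ge 1 - \epsilon_m$ and Lemma~\ref{lemma:gentle}(1) yields $D(\tilde{\sigma}_m, \sigma_m) \le \epsilon_m + \sqrt{\epsilon_m(1-\epsilon_m)}$. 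Using contractivity of trace distance under the quantum operation $\sem{S_m}$, i.e.\ $D(\rho_m, \tilde{\sigma}_m) \le D(\rho_{m-1}, \sigma_{m-1})$, and the triangle inequality, I telescope to
\[
D(\rho_l, \sigma_l) \;\le\; \sum_{m=1}^l \Bigl(\epsilon_m + \sqrt{\epsilon_m(1-\epsilon_m)}\Bigr) \;\approx\; \sum_{m=1}^l \sqrt{\epsilon_m},
\]
where the approximation uses $\epsilon_m \ll 1$. Since $\sigma_l \models P_l$ serves as the required witness, this establishes $\sigma \models_{\delta'} P_l$ with $\delta'$ essentially equal to $\sum_m \sqrt{\epsilon_m}$.

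The final step is to upgrade this pointwise bound in terms of the unknown $\epsilon_m$ into the joint $95\%$ closed form $\delta = \sum_m \sqrt{w_m^c} + \sqrt{\sum_m(\sqrt{w_m^+} - \sqrt{w_m^c})^2}$ using the CIs from Part 1. I would decompose $\sqrt{\epsilon_m} = \sqrt{w_m^c} + (\sqrt{\epsilon_m} - \sqrt{w_m^c})$: the median terms contribute $\sum_m \sqrt{w_m^c}$, and the sum of deviations is controlled by its Euclidean norm using a normal approximation to the beta posterior (valid since $k \gg l^2$, so each beta is narrowly concentrated) together with the (approximate) independence of the per-assertion samples. The main obstacle I anticipate lies precisely here: the trial counts across assertions are not genuinely independent, since the sample size for assertion $m$ depends on the earlier $k_i$, so the clean Euclidean form $\sqrt{\sum_m(\sqrt{w_m^+}-\sqrt{w_m^c})^2}$ emerges only after conditioning on the vector of attained sample sizes and absorbing second-order fluctuations into the leading $\sqrt{w_m^c}$ terms. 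I would finish by verifying that the induced joint confidence level remains at least $95\%$ under this normal approximation.
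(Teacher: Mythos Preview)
Your proposal follows essentially the same route as the paper: Clopper--Pearson beta-quantile intervals for Part~1, and for Part~2 a telescoping argument via Lemma~\ref{lemma:gentle} and contractivity of $\sem{S_m}$ under trace distance, followed by the ``sum of centres plus Euclidean norm of half-widths'' combination of the per-segment intervals. One point to tighten: in Part~2 you slide between the design tolerance $\epsilon_m$ (fixed, known) and the actual abort probability $\varepsilon_m = 1-\tr(P_m\sem{S_m}(\sigma_{m-1}))$ (unknown, the object the beta posterior estimates); the telescoped bound and the subsequent decomposition $\sqrt{\cdot} = \sqrt{w_m^c} + (\sqrt{\cdot}-\sqrt{w_m^c})$ must be written in terms of $\varepsilon_m$, not $\epsilon_m$, since the conclusion $\delta$ is data-dependent and the hypothesis ``$\epsilon_m \ge w_m^-$'' does not give you $\tilde\sigma_m \models_{\epsilon_m} P_m$ but only that $\varepsilon_m$ is small.
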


\begin{proof}
Postponed to Appendix~\ref{proof:robust}
\end{proof}

With this theorem, we can test and debug approximate quantum programs by counting the number of occurrences of the error messages from different assertions.
If the observed assertion checking failure frequency is significantly higher or lower than the expected error parameter of a program segment, we can conclude that this program segment is correct or incorrect with high confidence.
If all program segments appear to be correct, we can conclude that the final output of the original program approximately satisfies the last predicate within a bounded error parameter.

\section{Transformation techniques for implementation on quantum computers}\label{sec:transformation}
In the previous section, we have illustrated how to test and debug a quantum program with the proposed projection-based assertions and proved its effectiveness. 
However, there exists a gap that makes the assertions not directly executable on a real quantum computer.
There are two reasons for this incompatibility as explained in the following:
\begin{enumerate}

    \item \textbf{Limited Measurement Basis:} 
    Not all projective measurements are supported on a quantum computer and only projective measurement that lie in the computational basis can be physically implemented directly with today's quantum computing underlying technologies (in Section~\ref{sec:measurement-restriction}). 
    But there is no restriction on the projection operator $P$ in the assertions so that $P$ could be arbitrary projection operator in the Hilbert space.
    For example, $P = \qbit{+}{} = \frac{1}{2}(\ket{0}+\ket{1})(\bra{0}+\bra{1})$ is on a basis of 
    $\{\ket{+},\ket{-}\}$.
    These assertions with projections not in the computational basis cannot be directly executed on a real quantum computer.
    
    \item \textbf{Dimension Mismatch:} 
    A projective measurement, which is already in the computational basis, may still not be executable because the number of dimensions of its corresponding subspace cannot be directly implemented by measuring an integer number of qubits.
    For an $n$-qubit system, only projections with $\rank P \in \{2^{n-1},2^{n-2},\cdots, 1\}$ can be directly implemented~(in Section~\ref{sec:measurement-restriction}).
    But the rank of the projection in an assertion can be any integer between $0$ and $2^{n}$.
    For example, a projection in a $2$-qubit system can be $P=\qbit{00}{}+\qbit{01}{}+\qbit{11}{}$. 
    An assertion with such projection cannot be directly implemented because $\rank P = 3$ and $\rank P \notin \{2, 1\}$. 
    

\end{enumerate}

In this section, we introduce several transformation techniques to overcome these two obstacles.
The basic idea is to use the conjunction of projections and auxiliary qubit to convert the target assertion into some new assertions without dimension mismatch. Then some additional unitary transformations are introduced to rotate the basis in the projective measurements.
These transformation techniques can be employed to compile the assertions and make a quantum program with projection-based assertions executable on a measurement-restricted real quantum computer.  





\subsection{Additional unitary transformation}\label{sec:unitarytransformation}
We first resolve the limited measurement basis problem without considering the dimension mismatch problem.
Suppose the assertion $\ass(\overline{q};P)$ we hope to implement is over $n$ qubits, that is, $\overline{q} = q_1,q_2,\cdots,q_n$, each of $q_i$ is a single qubit variable. 
We assume that $\rank P = 2^m$ for some  integer $m$ with $0\le m\le n$ so there is no dimension mismatch problem. 
\begin{proposition}\label{prop:unitary}
	For projection $P$ with $\rank P = 2^m$, there exists a unitary transformation $U_P$ such that (here $I_{q_i}=I_{\mathcal{H}_{q_{i}}}$):
	$$U_PPU_P^\dag = Q_{q_1}\otimes Q_{q_2}\otimes\cdots\otimes Q_{q_n}=\bigotimes_{i=1}^nQ_{q_i}\triangleq Q_P,$$
	where $Q_{q_i} \in \{\qbit{0}{q_i}, \qbit{1}{q_i}, I_{q_i}\}$  for each $1\le i\le n$.
\end{proposition}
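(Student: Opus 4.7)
The plan is to reduce the claim to a purely linear-algebraic statement about projections of a fixed rank, and then pick a convenient tensor-product projection of the right rank as the canonical target for $U_P$ to rotate $P$ into.

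First, I would identify a natural candidate $Q_P$ of the desired form. Since $\rank P = 2^m$, take
$$Q_P \;=\; \underbrace{|0\rangle\langle 0|_{q_1}\otimes\cdots\otimes|0\rangle\langle 0|_{q_{n-m}}}_{n-m\ \text{factors}}\;\otimes\;\underbrace{I_{q_{n-m+1}}\otimes\cdots\otimes I_{q_n}}_{m\ \text{factors}}.$$
Each factor lies in $\{|0\rangle\langle 0|,|1\rangle\langle 1|,I\}$ as required, and a direct count shows $\rank Q_P = 1^{n-m}\cdot 2^m = 2^m$, matching $\rank P$. Its range is spanned by the computational basis vectors $|0\cdots 0\rangle_{q_1\cdots q_{n-m}}\otimes|t\rangle_{q_{n-m+1}\cdots q_n}$ for $t\in\{0,1\}^m$.

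Next I would use the spectral theorem: since $P$ is a projection, its only eigenvalues are $0$ and $1$, with multiplicities $2^n-2^m$ and $2^m$ respectively. Pick an orthonormal basis $|\psi_1\rangle,\dots,|\psi_{2^m}\rangle$ of the range of $P$ and complete it with an orthonormal basis $|\psi_{2^m+1}\rangle,\dots,|\psi_{2^n}\rangle$ of $P^\perp$, so that $P=\sum_{i=1}^{2^m}|\psi_i\rangle\langle\psi_i|$. Analogously, enumerate the $2^m$ computational basis vectors spanning $Q_P$ as $|\phi_1\rangle,\dots,|\phi_{2^m}\rangle$ and the remaining $2^n-2^m$ basis vectors as $|\phi_{2^m+1}\rangle,\dots,|\phi_{2^n}\rangle$, so that $Q_P=\sum_{i=1}^{2^m}|\phi_i\rangle\langle\phi_i|$.

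Finally I would define $U_P$ by $U_P|\psi_i\rangle = |\phi_i\rangle$ for all $1\le i\le 2^n$ and extend by linearity. Because both families are orthonormal bases of $\cH_{\overline{q}}$, $U_P$ is unitary. A short computation then gives
$$U_P P U_P^\dagger \;=\; \sum_{i=1}^{2^m} U_P|\psi_i\rangle\langle\psi_i|U_P^\dagger \;=\; \sum_{i=1}^{2^m}|\phi_i\rangle\langle\phi_i| \;=\; Q_P,$$
which is exactly the stated identity. The argument has no real obstacle: the only subtlety is making sure that the completion to an orthonormal basis on both sides is consistent with the split into range and kernel, but this is automatic once $\rank P = \rank Q_P$. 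The proposition says nothing about constructing $U_P$ efficiently as a circuit, so existence via basis matching is sufficient.
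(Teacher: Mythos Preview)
Your proof is correct and follows essentially the same approach as the paper: the paper's proof is a one-liner stating that $U_P$ and $Q_P$ are obtained immediately by diagonalizing $P$, and your argument is precisely an explicit unpacking of that diagonalization via the spectral theorem, together with a concrete choice of the target $Q_P$.
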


\begin{proof}
$U_P$ and $Q_P$ can be obtained immediately after we diagonalize the projection $P$.
\end{proof}

We call the pair $(U_P,Q_P)$ an \textbf{i}mplementation in the \textbf{c}omput-ational \textbf{b}asis (ICB for short) of $\ass(\overline{q};P)$.
ICB is not unique in general. 
According to this proposition, we have the following procedure to implement ${\bf assert}(\overline{q};P)$:
\begin{enumerate}
	\item Apply $U_P$ on $\overline{q}$;
	\item Check $Q_P$ in the following steps:
	For each $1\le i\le n$, if $Q_{q_i} = \qbit{0}{q_i}$ or $\qbit{1}{q_i}$, then measure $q_i$ in the computational basis to see whether the outcome $k$ is consistent with $Q_{q_i}$; that is, $Q_{q_i} = \qbit{k}{q_i}$. If all outcomes are consistent, go ahead; otherwise, we terminate the program with an error message;
	\item Apply $U_P^\dag$ on $\overline{q}$.
\end{enumerate}
The transformation for $\ass(\overline{q};P)$ with ICB $(U_P,Q_P)$  when $\rank P = 2^m$ is:
\begin{align*}
    \ass(\overline{q};P)\equiv \ \overline{q}:=U_P[\overline{q}] ; \  \ass(\overline{q};Q_P) ;\  \overline{q}:=U_P^{\dag}[\overline{q}] 
\end{align*}
Since $Q_P$ is now a projection in the computational basis, $\ass(\overline{q};Q_P)$ can be executed by Definition~\ref{definition:semantics} and the projective measurement constructed by $Q_P$ is executable.

\begin{example}
	Given a two-qubit register $\overline{q} = q_1,q_2$, if we want to test whether it is in the Bell state (maximally entangled state) $|\Phi\> = \frac{1}{\sqrt{2}}(|00\>+|11\>)$, we can use the assertion ${\bf assert}(\overline{q}; P = |\Phi\>\<\Phi|)$.
	To implement it in the computational basis, noting that 
	\begin{align*}
	     & {\rm CNOT}[q_1,q_2]H[q_1]\cdot P\cdot H[q_1]{\rm CNOT}[q_1,q_2] \\
	     & = \qbit{0}{q_1}\otimes \qbit{0}{q_2}
	\end{align*}
we can first apply ${\rm CNOT}$ gate on $\overline{q}$ and $H$ gate on $q_1$, then measure $q_1$ and $q_2$ in the computational basis. If both outcomes are ``0'', we apply $H$ on $q_1$ and ${\rm CNOT}$ on $\overline{q}$ again to recover the state; otherwise, we terminate the program and report that the state is {\rm not} Bell state $|\Phi\>$. 
\end{example}

\subsection{Combining assertions}\label{sec:combining}

In the first transformation technique, we solve the measurement basis issue but do not consider the dimension mismatch issue.
The next two techniques are proposed to solve the dimension mismatch issue.
We first consider an assertion $\ass(\overline{q};P)$ in which the projection $P$ has $\rank P\le 2^{n-1}$ and $\rank P \neq 2^m$ with some integer $m$.
We have the following proposition to decompose this assertion into multiple sub-assertions that do not have dimension mismatch issues. 

\begin{proposition}\label{prop:combining}
	For projection $P$ with $\rank P \le 2^{n-1}$, there exist projections $P_1,P_2,\cdots,P_l$ satisfying $\rank P_i = 2^{n_i}$ for all $1\le i\le l$, such that 
	$P = P_1\cap P_2\cap\cdots\cap P_l.$
\end{proposition}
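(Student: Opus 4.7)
The plan is to construct the $P_i$ explicitly by a basis-extension argument in which every $P_i$ has rank exactly $2^{n-1}$. The key observation is that the hypothesis $r := \rank P \le 2^{n-1}$ guarantees that $P$ fits inside some $2^{n-1}$-dimensional subspace. If $r$ is already a power of two, I would take $l=1$ and $P_1 = P$; otherwise $r < 2^{n-1}$ strictly, since $2^{n-1}$ is itself a power of two, so $2^{n-1} - r \ge 1$ and there is room to maneuver.

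In the nontrivial case, I would fix an orthonormal basis $\{|v_1\>,\dots,|v_r\>\}$ of $P$ and extend it to an orthonormal basis $\{|v_1\>,\dots,|v_{2^n}\>\}$ of $\cH$. For every index $j \in \{r+1,\dots,2^n\}$ I would build a $2^{n-1}$-dimensional subspace containing $P$ yet excluding the particular basis vector $|v_j\>$: pick any $T_j \subseteq \{r+1,\dots,2^n\}\setminus\{j\}$ of size $2^{n-1} - r$ (possible since the ambient set has $2^n - r - 1 \ge 2^{n-1} - r$ elements whenever $n \ge 1$), and set
\[
P_j \;=\; \mathrm{span}\big(\{|v_i\>: i\le r\}\cup\{|v_i\>: i\in T_j\}\big).
\]

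I would then verify three things: (i) $\rank P_j = r + (2^{n-1} - r) = 2^{n-1}$, a power of two as required; (ii) $P \subseteq P_j$, so $P \subseteq \bigcap_j P_j$; and (iii) $|v_j\> \notin P_j$ by construction, so any vector in $\bigcap_{j>r} P_j$, when expanded in the basis $\{|v_i\>\}$, must have zero coefficient along every $|v_j\>$ with $j>r$, forcing it to lie in $P$. Putting the three together yields $\bigcap_{j=r+1}^{2^n} P_j = P$, and one may take $l = 2^n - r$ with $n_i = n-1$ for every $i$. No result from earlier in the paper is needed; the argument is purely linear algebra on closed subspaces of $\cH$, so I do not expect a substantive obstacle. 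The only point requiring care is the strict inequality $r < 2^{n-1}$ in the nontrivial case, which is what makes the hyperplane-avoidance step nondegenerate; the resulting $l = 2^n - r$ is exponential in $n$, but the proposition asserts only existence and leaves the optimality of $l$ to later sections.
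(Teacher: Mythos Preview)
Your argument is correct: the basis-extension construction, excluding one extra basis vector $|v_j\>$ at a time, yields a valid decomposition $P = \bigcap_{j>r} P_j$ with each $\rank P_j = 2^{n-1}$, and the verifications (i)--(iii) go through as you say.

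The paper's proof takes a more economical but closely related route. It also works in an orthonormal eigenbasis of $P$ (equivalently, your extended basis $\{|v_i\>\}$), but instead of $2^n - r$ projections it uses exactly two: writing $P = U\Lambda U^\dag$ with $\Lambda = \mathrm{diag}(1,\dots,1,0,\dots,0)$, it takes $\Lambda_1$ with ones in positions $1,\dots,2^{n-1}$ and $\Lambda_2$ with ones in positions $1,\dots,r$ together with positions $2^{n-1}+1,\dots,2^{n}-r$, so that $\Lambda_1\cap\Lambda_2 = \Lambda$ while $\rank\Lambda_1 = \rank\Lambda_2 = 2^{n-1}$. This establishes the sharper fact, stated in the text immediately after the proposition, that $l=2$ always suffices. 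Your construction is conceptually cleaner---each $P_j$ has an obvious role, and no combinatorial pattern on the diagonal needs to be checked---but the paper's two-projection trick is what justifies the practical remark that the number of sub-assertions need not grow with the ambient dimension.
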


\begin{proof}
Postponed to Appendix~\ref{proof:combining}.
\end{proof}
Essentially, this way works for our scheme because conjunction can be defined in Birkhoff-von Neumann quantum logic.  
Theoretically, $l=2$ is sufficient; but in practice, a larger $l$ may allow us to choose simpler $P_i$ for each $i\leq l$.   


Using the above proposition, to implement $\ass(\overline{q};P)$, we may sequentially apply $\ass(\overline{q};P_1)$, $\ass(\overline{q};P_2),\ \cdots$
, $\ass(\overline{q};P_l)$. Suppose $(U_{P_i}, Q_{P_i})$ is an ICB of $\ass(\overline{q};P_i)$ for $1\le i\le l$, we have the following  scheme to implement $\ass(\overline{q};P)$:
\begin{enumerate}
	\item Set counter $i = 1$;
	\item If $i=1$, apply $U_{P_1}$; else if $i=l$, apply $U_{P_l}^\dag$ and return; otherwise, apply $U_{P_{i-1}}^\dag U_{P_{i}}$;
	\item Check $Q_{P_i}$; $i:=i+1$; go to step $(2)$.
\end{enumerate}

The transformation for $\ass(\overline{q};P)$ when $\rank P \le 2^{n-1}$ is:
\begin{align*}
    \ass(\overline{q};P)\equiv & \ \ass(\overline{q};P_1); \\
    & \ass(\overline{q};P_2) ; \\
    & \dots\dots ; \\
    & \ass(\overline{q};P_l)  
\end{align*}
where $\rank P_i = 2^{n_i}$ and $P = P_1\cap P_2\cap\cdots\cap P_l$. There are no dimension mismatch issues for these sub-assertions and they can be further transformed with Proposition~\ref{prop:unitary}. 

\begin{example}
	Given register $\overline{q} = q_1,q_2,q_3$, how to implement $\ass(\overline{q};P)$ where 
	$$P = \qbit{00}{q_1q_2}\otimes I_{q_3} + \qbit{111}{q_1q_2q_3}$$
	
Observe that $P = P_1 \cap P_2$ where
	\begin{align*}
		&P_1 = (\qbit{00}{q_1q_2} + \qbit{11}{q_1q_2})\otimes I_{q_3}, \\ 
		& P_2 = \qbit{00}{q_1q_2}\otimes I_{q_3} + \qbit{100}{q_1q_2q_3}+\qbit{111}{q_1q_2q_3}.
	\end{align*}		
	with following properties:
	\begin{align*}
		&{\rm CNOT}[q_1,q_2]\cdot P_1\cdot {\rm CNOT}[q_1,q_2] = I_{q_1}\otimes\qbit{0}{q_2}\otimes I_{q_3} \\
		&{\rm Toffoli}[q_1,q_3,q_2]\cdot P_2\cdot {\rm Toffoli}[q_1,q_3,q_2] \\ 
		& = I_{q_1}\otimes\qbit{0}{q_2}\otimes I_{q_3}.
	\end{align*}
	Therefore, we can implement $\ass(\overline{q};P)$ by:
	\begin{itemize}
		\item Apply ${\rm CNOT}[q_1,q_2]$;
		\item Measure $q_2$ and check if the outcome is ``0''; if not, terminate and report the error message;
		\item Apply ${\rm CNOT}[q_1,q_2]$ and then ${\rm Toffoli}[q_1,q_3,q_2]$;
		\item Measure $q_2$ and check if the outcome is ``0''; if not, terminate and report the error message;
		\item Apply ${\rm Toffoli}[q_1,q_3,q_2]$.
	\end{itemize}
\end{example}

\subsection{Auxiliary qubits}\label{sec:auxiliaryqubit}

The previous two techniques can transform projections with $\rank P \le 2^{n-1}$ but those projections with $\rank P > 2^{n-1}$ remain unresolved.
This case cannot be handled with the conjunction of a group of sub-assertions directly because logic conjunction can only result in a subspace with fewer dimensions (compared with the original subspaces of the projections in the sub-assertions).
The possible subspace of a projection in an $n$-qubit system has at most $2^{n-1}$ dimensions since we have to measure at least one qubit.
As a result, we cannot use logic conjunction to construct a projection with $\rank P > 2^{n-1}$.
The logic disjunction of projections with small $\rank$s can create a subspace of larger size but it is not suitable for assertion design.
As discussed at the beginning of Section~\ref{sec:debugging}, it is expected that a correct state is not changed during the assertion checking.
But if a state $\rho$ at the tested program location is in a space of a large size, applying a projective measurement with a small subspace may destroy the tested state when the tested state is not in the small subspace, leading to inefficient assertion checking.

We propose the third technique, introducing auxiliary qubits, to tackle this problem.
Actually, one auxiliary qubit is already sufficient.
Suppose we have an $n$-qubit program with a $2^{n}$-dimensional state space.
If we add one additional qubit into this system, the system now has $n+1$ qubits with a $2^{n+1}$-dimensional state space.
This new qubit is not in the original quantum program so it is not involved in any assertions for the program.
A projection $P$ with $2^{n-1}< \rank P \le 2^{n}$ can thus be implemented in the new $2^{n+1}$-dimensional space using the previous two transformation techniques.
One auxiliary qubit is sufficient because the projection $P$ is originally in a $2^{n}$-dimensional space and we always have $\rank P \le 2^{n}$.

The transformation for $\ass(\overline{q};P)$ when $\rank P > 2^{n-1}$ is: 
\begin{align*}
    \ass(\overline{q};P)\equiv & \ a:=|0\rangle; \ \ass(a,\overline{q};\qbit{0}{a}\otimes P)
\end{align*}
where $a$ is the new auxiliary qubit. Noting that $\rank(\qbit{0}{a}\otimes P) = \rank P\le 2^n$. 



\begin{example}
	Given register $\overline{q}=q_1,q_2$, we aim to implement $\ass(\overline{q};P)$ where
	$P = \qbit{0}{q_1}\otimes I_{q_2} + \qbit{11}{q_1q_2}.$
	
	We may have the decomposition $\qbit{0}{a}\otimes P = P_0\cap P_1$, where	
	\begin{align*}
		&P_0 = \qbit{0}{a}\otimes I_{\overline{q}}, \\
		& P_1 = \qbit{00}{aq_1}\otimes I_{q_2} + \qbit{011}{aq_1q_2}+\qbit{100}{aq_1q_2}
	\end{align*}
	and $P_1$ can be implemented with one additional unitary transformation:
	$${\rm Fredkin}[q_2,a,q_1]\cdot P_1\cdot {\rm Fredkin}[q_2,a,q_1] = I_{a}\otimes\qbit{0}{q_1}\otimes I_{q_2}.$$
	
	Note that $P_0$ automatically holds since the auxiliary qubit $a$ is already initialized to $|0\>$, we only need to execute:
	\begin{itemize}

	    \item Introduce auxiliary qubit $a$, initialize it to $|0\>$;
		\item Apply ${\rm Fredkin}[q_2,a,q_1]$;
		\item Measure $q_1$ and check if the outcome is ``0''; if not, terminate and report the error message;
		\item Apply ${\rm Fredkin}[q_2,a,q_1]$; free the auxiliary qubit $a$.
	\end{itemize}
\end{example}



\subsection{Local projection: trade in checking accuracy for implementation efficiency}
\label{sec:efficient-implementation}
As shown in the three transformation techniques, we need to manipulate the projection operators and some unitary transformations to implement an assertion.
These transformations can be easily automated when $n$ is small or the tested state is not fully entangled (which means we can deal with them part by part directly). 
For projections over multiple qubits, it is possible that the qubits are highly entangled.
Asserting such entangled states accurately requires non-trivial efforts to find the unitary transformations and we need to manipulate operators of size $2^{n}$ for an $n$-qubit system in the worst case, which makes it hard to fully automate the transformations on a classical computer when $n$ is large. 
Such scalability issue widely exists in quantum computing research that requires automation on a classical computer, e.g., simulation~\cite{chen2018classical}, compiler optimization and its verification~\cite{hietala2019verified,shi2019contract}, formal verification of quantum circuits~\cite{paykin2017qwire,rand2018qwire}. 

In our runtime projection-based assertion checking, we propose \textbf{local projection} technique to mitigate this scalability problem (not fully resolve it) 
by designing assertions that only manipulate and observe part of a large system without affecting a highly entangled state over multiple qubits.
These assertions, which are only applied on a smaller number of qubits, could always be automated easily with simplified implementations but the assertion checking constraints are also relaxed.
This approach is inspired by the quantum state tomography via local measurements~\cite{PhysRevLett.89.207901,PhysRevA.86.022339,PhysRevLett.118.020401}, a common approach in quantum information science. 

We first introduce the notion of partial trace to describe the state (operator) of a subsystem. 
Let $\overline{q}_1$ and $\overline{q}_2$ be two disjoint registers with corresponding state Hilbert space $\cH_{\overline{q}_1}$ and $\cH_{\overline{q}_2}$, respectively. 
The partial trace over $\cH_{\overline{q}_1}$ is a mapping $\tr_{\overline{q}_1}(\cdot)$ from operators on $\cH_{\overline{q}_1}\otimes\cH_{\overline{q}_2}$ to operators in $\cH_{\overline{q}_2}$ defined by:
$\tr_{\overline{q}_1}(|\phi_1\>_{\overline{q}_1}\<\psi_1|\otimes|\phi_2\>_{\overline{q}_2}\<\psi_2|) = \<\psi_1|\phi_1\>\cdot|\phi_2\>_{\overline{q}_2}\<\psi_2|$ for all $|\phi_1\>,|\psi_1\>\in\cH_{\overline{q}_1}$ and $|\phi_2\>,|\psi_2\>\in\cH_{\overline{q}_2}$ together with linearity.
The partial trace $\tr_{\overline{q}_2}(\cdot)$ over $\cH_{\overline{q}_2}$ can be defined dually. 
Then, the local projection is defined as follows:
\begin{definition}[Local projection]
	Given $\ass(\overline{q};P)$, a local projection $P_{\overline{q}^\prime}$ over $\overline{q}^\prime\subseteq\overline{q}$ is defined as:
	$$P_{\overline{q}^\prime} = \supp\left(\tr_{\overline{q}\backslash\overline{q}^\prime}(P)\right).$$
\end{definition}

\begin{proposition}[\textbf{Soundness} of local projection]
    For any $\rho\models P$, we have $\rho \models P_{\overline{q}^\prime}\otimes I_{\overline{q}\backslash\overline{q}^\prime}$.
\end{proposition}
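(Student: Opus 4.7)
The plan is to unpack $\rho\models P$ pointwise on the support, and use a standard Schmidt-decomposition argument to reduce the statement to a known monotonicity property of the partial trace. Write $\overline{q}'' = \overline{q}\setminus\overline{q}'$ throughout. The goal is to show $\supp(\rho)\subseteq P_{\overline{q}'}\otimes\mathcal{H}_{\overline{q}''}$ as subspaces of $\mathcal{H}_{\overline{q}}$, since this subspace is exactly the range of the projection $P_{\overline{q}'}\otimes I_{\overline{q}''}$.

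First, I would reduce to pure states: it suffices to show that every $|\psi\rangle\in\supp(\rho)$ lies in $P_{\overline{q}'}\otimes\mathcal{H}_{\overline{q}''}$, since $\supp(\rho)$ is spanned by such vectors. Because $\rho\models P$ implies $\supp(\rho)\subseteq P$, any such $|\psi\rangle$ satisfies $P|\psi\rangle=|\psi\rangle$, equivalently the operator inequality
\[
|\psi\rangle\langle\psi|\;\le\;P.
\]

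Next, I would invoke two elementary facts. (i) The partial trace $\tr_{\overline{q}''}$ is completely positive and linear, hence operator-monotone, so
\[
\tr_{\overline{q}''}\bigl(|\psi\rangle\langle\psi|\bigr)\;\le\;\tr_{\overline{q}''}(P),
\]
and taking supports gives $\supp\!\bigl(\tr_{\overline{q}''}(|\psi\rangle\langle\psi|)\bigr)\subseteq \supp\!\bigl(\tr_{\overline{q}''}(P)\bigr)=P_{\overline{q}'}$. (ii) For any pure state $|\psi\rangle\in\mathcal{H}_{\overline{q}'}\otimes\mathcal{H}_{\overline{q}''}$, the Schmidt decomposition $|\psi\rangle=\sum_i\alpha_i|e_i\rangle\otimes|f_i\rangle$ with $\alpha_i\neq 0$ yields $\tr_{\overline{q}''}(|\psi\rangle\langle\psi|)=\sum_i|\alpha_i|^2|e_i\rangle\langle e_i|$, whose support is $\mathrm{span}\{|e_i\rangle\}$, and manifestly $|\psi\rangle\in\mathrm{span}\{|e_i\rangle\}\otimes\mathcal{H}_{\overline{q}''}$.

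Combining the two facts closes the argument:
\[
|\psi\rangle\;\in\;\supp\!\bigl(\tr_{\overline{q}''}(|\psi\rangle\langle\psi|)\bigr)\otimes\mathcal{H}_{\overline{q}''}\;\subseteq\;P_{\overline{q}'}\otimes\mathcal{H}_{\overline{q}''}.
\]
Since this holds for every pure vector in $\supp(\rho)$, we obtain $\supp(\rho)\subseteq P_{\overline{q}'}\otimes\mathcal{H}_{\overline{q}''}$, i.e.\ $\rho\models P_{\overline{q}'}\otimes I_{\overline{q}''}$.

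I do not expect any serious obstacle: the only ingredients are operator monotonicity of the partial trace and the Schmidt decomposition, both standard. The one place to be careful is the identification between $\supp(\cdot)$ of a positive operator and the corresponding subspace/projection, and the convention that writing $P_{\overline{q}'}\otimes I_{\overline{q}''}$ for a subspace means the range of that tensor-product projection; I would state this identification explicitly once at the start so the chain of inclusions above is unambiguous.
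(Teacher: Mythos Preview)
Your proof is correct and matches the paper's approach: the paper simply invokes the subspace inclusion $P\subseteq P_{\overline{q}'}\otimes I_{\overline{q}\setminus\overline{q}'}$ as an immediate fact, and your partial-trace monotonicity plus Schmidt-decomposition argument is precisely the standard justification of that inclusion spelled out in detail. The only redundancy is the detour through $\supp(\rho)$---since you ultimately use only $|\psi\rangle\in P$, you are really proving $P\subseteq P_{\overline{q}'}\otimes I_{\overline{q}\setminus\overline{q}'}$ directly, which is exactly what the paper asserts.
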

\begin{proof}
    Immediately from the fact $P\subseteq P_{\overline{q}^\prime}\otimes I_{\overline{q}\backslash\overline{q}^\prime}$.
\end{proof}


This simplified assertion with $P_{\overline{q}^\prime}$ will lose some checking accuracy because some states not in $P$ may be included in $P_{\overline{q}^\prime}$, allowing false positives.
However, by taking the partial trace, we are able to focus on the subsystem of $\overline{q}^\prime$.
The implementation of $\ass(\overline{q}^\prime;P_{\overline{q}^\prime})$ can partially test whether the state satisfies $P$. Moreover, the number of qubits in $\overline{q}^\prime$ is smaller, and we only need to manipulate small-size operators when implementing $\ass(\overline{q}^\prime;P_{\overline{q}^\prime})$.
We have the following implementation strategy which is essentially a trade-off between assertion implementation efficiency and checking accuracy:
\begin{itemize}
	\item Find a sequence of local projection $P_{\overline{q}_1}, P_{\overline{q}_2},\cdots,P_{\overline{q}_l}$ of $\ass(\overline{q};P)$;
	\item Instead of implementing the original $\ass(\overline{q};P)$, we sequentially apply $\ass(\overline{q}_1;P_{\overline{q}_1})$, $\ass(\overline{q}_2;P_{\overline{q}_2})$, $\cdots$, $\ass(\overline{q}_l;P_{\overline{q}_l})$.
\end{itemize}
\begin{example}
Given register $\overline{q} = q_1,q_2,q_3,q_4$, we want to check if the state is the superposition of the following states:
\begin{align*}
&|\psi_1\> = |+\>_{q_1}|111\>_{q_2q_3q_4},\quad |\psi_2\> = |000\>_{q_1q_2q_3}|-\>_{q_4}, \quad \\
&|\psi_3\> = \frac{1}{\sqrt{2}}|0\>_{q_1}\left(|00\>_{q_2q_3}+|11\>_{q_2q_3}\right)|1\>_{q_4}.
\end{align*}
To accomplish this, we may apply the assertion $\ass(\overline{q};P)$ with $P = \supp\left(\sum_{i=1}^3|\psi_i\>\<\psi_i|\right)$. However, projection $P$ is highly entangled which prevents efficient implementation. 
But if we only observe part of the system, we will the following local projections:
\begin{align*}
P_{q_1q_2} &= \tr_{q_3q_4}(P) = \qbit{0}{q_1}\otimes I_{q_2} + \qbit{11}{q_1q_2},\\
P_{q_2q_3} &= \tr_{q_1q_4}(P) = \qbit{00}{q_2q_3} + \qbit{11}{q_2q_3},\\
P_{q_3q_4} &= \tr_{q_1q_2}(P) = \qbit{00}{q_3q_4} + \qbit{11}{q_3q_4}.
\end{align*}

To avoid implementing $\ass(\overline{q},P)$ directly, we may use $\ass(q_1,q_2; ~P_{q_1q_2})$, $\ass(q_2,q_3; ~P_{q_2q_3})$, and $\ass(q_3,q_4; P_{q_3q_4})$ instead. Though these assertions do not fully characterize the required property, their implementation requires only relatively low cost, i.e., each of them only acts on two qubits.
\end{example}

\subsection{Summary}
To the best of our knowledge, the three transformations constitute the first working flow to implement an arbitrary projective measurement on measurement-restricted quantum computers.
A complete flow to make an assertion $\ass(\overline{q};P)$ (on $n$ qubits) executable is summarized as follows:
\begin{enumerate}
	\item If $\rank P > 2^{n-1}$, initialize one auxiliary qubit $a$, let $n := n+1$ and $P:=\qbit{0}{a}\otimes P$ (Section~\ref{sec:auxiliaryqubit}); 
	\item If $\rank P \notin  \{2^{n-1},2^{n-2},\cdots, 1\}$, find a group of sub-assertions (Section~\ref{sec:combining});
	\item Apply unitary transformations to implement the assertion or sub-assertions (Section\ref{sec:unitarytransformation}).
\end{enumerate}
The three transformations cover all possible cases for projections with different $\rank$s and basis. 
Therefore, all projection-based assertions can finally be executed on a quantum computer.
The local projection technique can be applied when an assertion is hard to be implemented (automatically). Whether to use local projection is optional.

\section{Overall Comparison}\label{sec:overall}
In this section, we will have an overall comparison among \myAssertionName~and two other quantum program assertions in terms of assertion coverage (i.e., the expressive power of the predicates, the assertion locations) and debugging overhead (i.e., the number of executions, additional gates, measurements).

\textbf{Baseline:}
We use the statistical assertions~(Stat)~\cite{huang2019statistical} and the QEC-inspired assertions~(QECA)~\cite{liu2020quantum} as the baseline assertion schemes. 
To the best of our knowledge, they are the only published quantum program assertions till now.
Stat employs a classical statistical test on the measurement results to check if a state satisfies a predicate. 
QECA introduces auxiliary qubits to indirectly measure the tested state.

\subsection{Coverage analysis}\label{sec:coverage}
\textbf{Assertion predicates:} \myAssertionName~employs projections which are able to represent a wide variety of predicates.
However, both Stat and QECA only support three types of assertions: classical assertion, superposition assertion, and entanglement assertion.
The expressive power difference has been summarized in Figure~\ref{fig:coverage}.
For Stat, all these three types of assertions can be considered as $\rank P = 1$ special cases in \myAssertionName.
The corresponding projections are
\begin{align*}
    &P=\ket{t}\bra{t}, \text{t ranges over all $n$-bit strings for classical} \\ & \text{assertion~(suppose $n$ qubits are asserted)}  \\
    &P=\ket{+\!+\!+\dots}\bra{+\!+\!+\dots} \text{ for superposition assertion} \\
    &P=(\ket{00\dots0}+\ket{11\dots1})(\bra{00\dots0}+\bra{11\dots1})\\ &\text{  for entanglement assertion}
\end{align*}
Stat's language does not support other types of states.
QECA supports arbitrary $1$-qubit states (these states can naturally cover the classical assertion and superposition assertion in Stat), some special $2$-qubit entanglement states, and some special $3$-qubit entangle states.
These states can be considered as some $\rank P = 1, 2, 4$ special cases in \myAssertionName, respectively.
So all QECA assertions are covered in \myAssertionName. 
Moreover, the implementations of QECA assertions are all designed manually without a systematic assertion implementation generation so they cannot be extended to more cases directly.
The expressive power of the assertions in \myAssertionName, which can support many more complicated cases as introduced in Section~\ref{sec:debugging} and~\ref{sec:transformation}, is much more than that of the baseline schemes.

\textbf{Assertion locations:} 
Thanks to the expressive power of the predicates in \myAssertionName, projection-based assertions can be injected at more locations with complex intermediate states in a program.
The baseline schemes can only inject assertions at those locations with states that can be checked with the very limited types of assertions.
If the baseline schemes insert assertions at locations with other types of states, their assertions will always return negative results since the predicates in their assertions are not correct.
Therefore, the number of potential assertion injection locations of \myAssertionName~is much larger than that of the baseline schemes. 

\subsection{Overhead analysis}\label{sec:overheadcompare}
It is not easy to directly perform a fair overhead comparison between \myAssertionName~and the baseline because \myAssertionName~supports many more types of predicates as explained above.
We first discuss the impact of this difference in assertion coverage in practical debugging. 

\textbf{Assertion coverage impact:} \myAssertionName~support assertions that cannot be implemented in Stat and QECA. 
These assertions will help locate the bug more quickly.
When inserting assertions in a tested program, \myAssertionName~assertions can always be injected closer to a potential bug because \myAssertionName~allows more assertion injection locations. 
The potential bug location can then be narrowed down to a smaller program segment, which makes it easier for the programmers to manually search for the bug after an error message is reported. 

Then we remove the assertion coverage difference by assuming all the assertions are within the three types of assertions supported in all assertion schemes.

\textbf{Assertion checking overhead:} We mainly discuss two aspects of the assertion checking overhead, 1) the number of assertion checking program executions and 2) the numbers of additional unitary transformations (quantum gates) and measurements to implement each of the assertions.


\begin{enumerate}
    \item \textbf{Compare with Stat:} Stat's approach is quite different from \myAssertionName.
    It only injects measurements to directly measure the tested states without any additional transformations.
    
    (a) \textbf{number of executions:}
    The classical assertion, the first supported assertion type in Stat, is equivalent to the corresponding one in \myAssertionName. 
    The tested state remains unchanged if it is the expected state.
    However, when checking for superposition states and entanglement states, the number of assertion checking program executions will be large because 1) Stat requires a large number of samples for each assertion to reconstruct an amplitude distribution over multiple basis states, and 2) the measurements will always affect the tested states so that only one assertion can be checked per execution.
    It is not yet clear how many executions are required since the statistical properties of checking Stat assertions are not well studied. 
    The original Stat paper~\cite{huang2019statistical} claims to apply chi-square test and contingency table analysis (with no details about the testing process) on the measurement results collection of each assertion but it does not provide the numbers of required executions to achieve an acceptable confidence level for different assertions over different numbers of qubits, which makes it hard to directly compare the checking overhead (no publicly available code).
    We believe the number of executions will be large at least when the tested state is in a superposition state over multiple computational basis states.
    For example, the superposition assertion, which checks for the state $\ket{+\!+\!+\dots}$ in an $n$-qubit system, requires $k \gg 2^{n}$ testing executions to observe a uniform distribution over all $2^{n}$ basis states. 

    (b) \textbf{number of gates and measurements:} For an assertion (any type) in Stat, it only requires $n$ measurements on $n$ qubits in assertion checking but it may need to be executed many times as explained above.
    For the corresponding assertions in \myAssertionName, a classical assertion requires $n$ measurements (the same with Stat, e.g., Assertion $A_0$ in Figure~\ref{fig:shor_circuit}). 
    A superposition assertion requires additionally $2n$ H gates (e.g., Assertion $A_1$ in Figure~\ref{fig:shor_circuit}). 
    An entanglement assertion requires additionally $2(n-1)$ CNOT gates and $2$ H gates (e.g., Assertion $A_2$ in Figure~\ref{fig:shor_circuit}).
    \myAssertionName~only needs few additional gates (linear to the number of qubits) for the commonly supported assertions.

    \item \textbf{Compare with QECA:} All QECA assertions are equivalent to their corresponding \myAssertionName~assertions.
    Therefore, QECA has the same checking efficiency and supports multi-assertion per execution if we only consider those QECA-supported assertions.
    The statistical properties (Theorem~\ref{theorem:feasibility} and~\ref{theorem:robust}) we prove can also be directly applied to QECA.
    So \textbf{the number of the assertion checking executions is the same} for QECA and \myAssertionName.
    The difference between QECA and \myAssertionName~is that the actual assertion implementation  in terms of  quantum gates and measurements.
    The \textbf{implementation cost of \myAssertionName~is lower} than that of QECA because QECA always need to couple the auxiliary qubits with existing qubits. 
    We will have concrete data of the assertion implementation cost comparison between \myAssertionName~and QECA later in a case study in Section~\ref{sec:shor}.
    
\end{enumerate}

\section{Case Studies: Runtime Assertions for Realistic Quantum Algorithms}\label{sec:casestudy}

In this section, we perform case studies by applying projection-based assertions on two famous sophisticated quantum algorithms, the Shor's algorithm~\cite{shor1999polynomial} and the HHL algorithm~\cite{harrow2009quantum}.
For Shor's algorithm, we focus on a concrete example of its quantum order finding subroutine. 
The assertions are simple and can be supported by the baselines, which allows us to compare the resource consumption between \myAssertionName~and the baseline and show that \myAssertionName~could generate low overhead runtime assertions.
For HHL algorithm, instead of just asserting a concrete circuit implementation, we will show that \myAssertionName~could have non-trivial assertions that cannot be supported by the baselines.
In these non-trivial assertions, we will illustrate how the proposed techniques, i.e., combining assertions, auxiliary qubits, local projection, can be applied in implementing the projections.
Numerical simulation confirms that \myAssertionName~assertions can work correctly.

\subsection{Shor's algorithm}\label{sec:shor}
Shor's algorithm was proposed to factor a large integer~\cite{shor1999polynomial}. Given an integer $N$, Shor's algorithm can find its non-trivial factors within $O(poly(log(N)))$ time.
In this paper, we focus on its quantum order finding subroutine 
and omit the classical part which is assumed to be correct.

\begin{figure}[h]
\centering
\vspace{-10pt}
\begin{align*}
&p := |0\>^{\otimes n};\\
&{\rm\bf while\ }M[p] = 1 {\ \rm\bf do\ } \\
&\qquad p := |0\>^{\otimes n};\ q := |0\>^{\otimes n};\\ & \qquad \ass(p,q; A_{0}); \\ &\qquad p := H^{\otimes n}[p]; \\ &\qquad \ass(p,q; A_{1}); \\
&\qquad p,q := U_f[p,q];   \\ &\qquad \ass(p,q; A_{2}); \\ &\qquad p := {\rm QFT}^{-1}[p]; \\ &\qquad \ass(p,q; A_{3}); \\
&{\ \rm\bf od\ }
\end{align*}
\vspace{-20pt}
\caption{Shor's algorithm program with assertions}\label{fig:shor}
\vspace{-10pt}
\end{figure}

\subsubsection{Shor's algorithm program}
Figure~\ref{fig:shor} shows the program of the quantum subroutine in Shor's algorithm with the injected assertions in the quantum {\bf while}-language.
Briefly, it leverages Quantum Fourier Transform~(QFT) to find the period of the function $f(x) = a^{x}~{\rm mod}~N$ where $a$ is a random number selected by a preceding classical subroutine.
The transformation $U_f$, the measurement $M$, and the result set $R$  are defined as follows:
\begin{align*}
U_f &:\ \ket{x}_{p}\ket{0}_{q} \mapsto \ket{x}_{p}\ket{a^{x}~{\rm mod}~N}_{q} \\  M & =\Big\{M_{0} = \sum_{r\in R}\ket{r}\bra{r}, M_{1}=I-M_{0}\Big\}\\
R&=\{r~|~{\rm gcd}(a^{\frac{r}{2}} + 1,N) ~{\rm or~gcd}(a^{\frac{r}{2}} - 1,N)  \\ & \text{~is a nontrivial factor of N}  \}
\end{align*}
For the measurement, the set $R$ consists of the expected values that can be accepted by the follow-up classical subroutine.
For a comprehensive introduction, please refer to~\cite{nielsen2010quantum}.

\begin{figure*}[t]
\centering
\includegraphics[width=2.0\columnwidth]{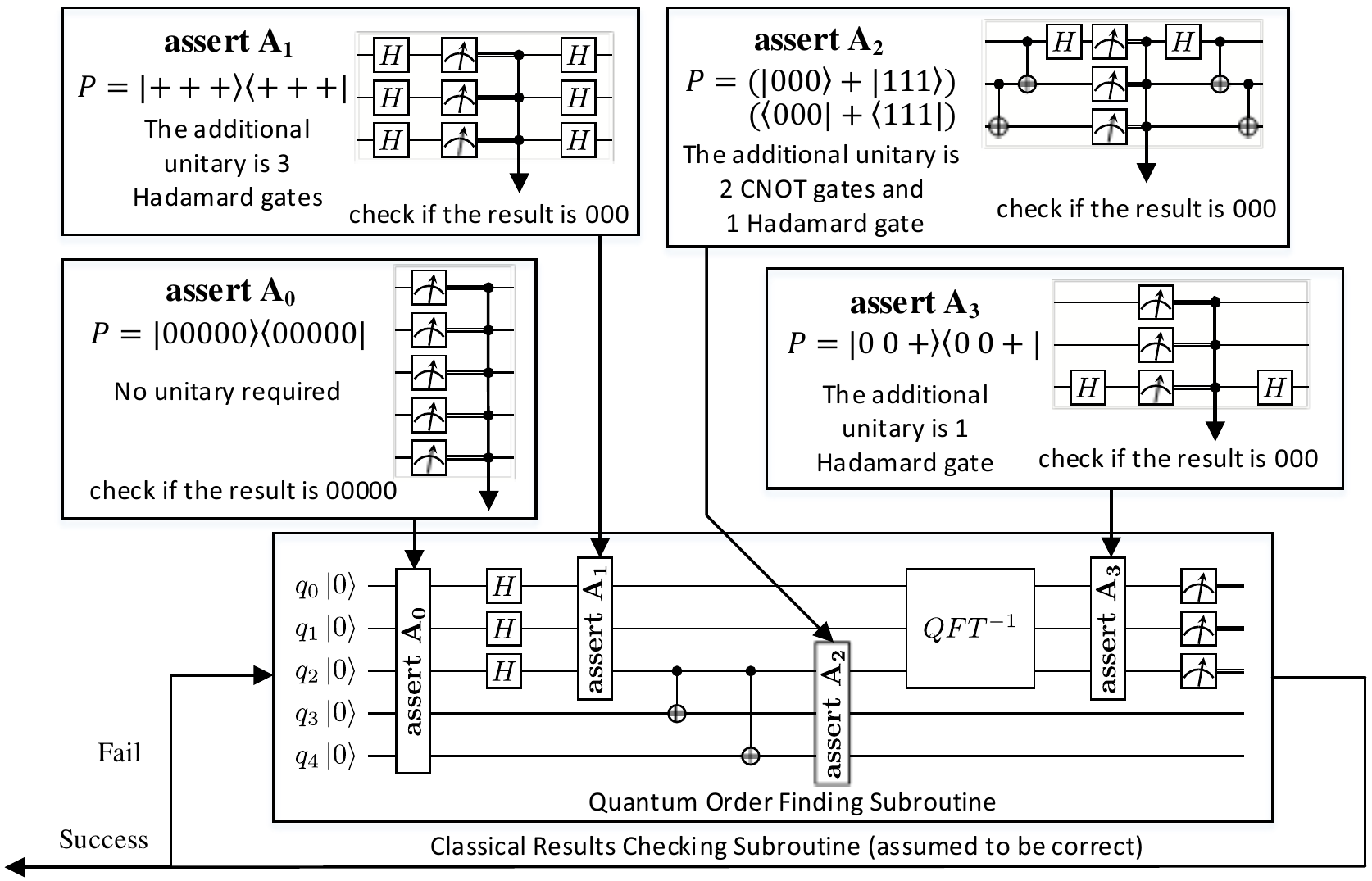}
\vspace{-10pt}
\captionsetup{justification=centering}
\caption{Assertion-injected circuit implementation for Shor's algorithm with $N=15$ and $a=11$}
\vspace{-10pt}
\label{fig:shor_circuit}
\end{figure*}

\subsubsection{Assertions for a concrete example}
The circuit implementation we select for the subroutine is for factoring $N=15$ with the random number $a=11$~\cite{vandersypen2001experimental}.
Based on our understanding of Shor's algorithm, we have four assertions, $A_{0}$, $A_{1}$, $A_{2}$, and $A_{3}$, as shown in Figure~\ref{fig:shor}.
Figure~\ref{fig:shor_circuit} shows the final assertion-injected circuit with 5 qubits.
The circuit blocks labeled with $\bf assert$ are for the four assertions with four projections defined as follows:
\begin{align*}
A_{0} &= \qbit{00000}{0,1,2,3,4};\\  A_{1} &= \ket{+\!+\!+}_{0,1,2}\bra{+\!+\!+} \otimes \qbit{00}{3,4}; \\
A_{2} &= \ket{+\!\,+}_{0,1}\bra{+\!\,+} \otimes (\ket{000}+\ket{111})_{2,3,4}(\bra{000}+\bra{111}); \\
A_{3} &= (\ket{000}+\ket{001})_{0,1,2}(\bra{000}+\bra{001})  \\ & \otimes (\ket{00}+\ket{11})_{3,4}(\bra{00}+\bra{11}).
\end{align*}
We detail the implementation of the assertion circuit blocks in the upper half of Figure~\ref{fig:shor_circuit}.
For each assertion, we list its projection, the additional unitary transformations, with the complete implementation circuit diagram.
For $A_{1}$, $A_{2}$, and $A_{3}$, since the qubits not fully entangled, we only assert part of the qubits without affecting the results.
The unitary transformations are decomposed into the combinations of CNOT gates and single-qubit gates, which is the same with QECA for a fair comparison.

\begin{table}[h]\small
  \centering
  \caption{Detailed assertion implementation cost comparison between \myAssertionName~and QECA~\cite{liu2020quantum}}
  \vspace{-5pt}
    \begin{tabular}{|c|c|c|c|c|c|c|c}
    \hline
          & \multicolumn{2}{c|}{$A_{0}$} & \multicolumn{2}{c|}{$A_{1}$} & \multicolumn{2}{c|}{$A_{3}$} \\
    \hline
     \# of     & \myAssertionName & QECA  & \myAssertionName  & QECA  & \myAssertionName  & QECA \\
   \hline
    H     & 0     & 0     & 6     & 6     & 2     & 2 \\
    \hline
     \textbf{CNOT } & \textbf{0}     & \textbf{5}     & \textbf{0}     & \textbf{6}     & \textbf{0}    & \textbf{4}\\
    \hline
    Measure & 5     & 5     & 3     & 3     & 3     & 3 \\
    \hline
  \textbf{ Aux. Qbit}  & \textbf{0}     & \textbf{1}     & \textbf{0}     & \textbf{1}   & \textbf{0}     & \textbf{1} \\
    \hline
    \end{tabular}%
     \vspace{-10pt}
  \label{tab:detailed_gate_cost}%
\end{table}%

\subsubsection{Assertion comparison}
Similar to Section~\ref{sec:overall}, we first compare the coverage of assertions for this realistic algorithm and then detail the implementation cost in terms of the number of additional gates, measurements, and auxiliary qubits.

\textbf{Assertion Coverage:}
All four assertions are supported in Stat and \myAssertionName.
For QECA, $A_0$, $A_{1}$, and $A_3$ are covered but $A_2$ is not yet supported even if it is an entanglement state. 
The reason is that the QECA assertion only supports 3-qubit entanglement states with $rank P = 4$ but $A_2$ is a 3-qubit entanglement state with $rank A_{2} = 1$.

We compare the circuit cost when implementing the assertions between \myAssertionName~and QECA. 
Stat is not included because we have already discussed the implementation difference in Section~\ref{sec:overheadcompare} and it is not clear how many executions are required for Stat.

Table~\ref{tab:detailed_gate_cost} shows the implementation cost of the three assertions supported by both \myAssertionName~and QECA.
In particular, we compare the number of H gates, CNOT gates, measurements, and auxiliary qubits.
It can be observed that \myAssertionName~uses no CNOT gates and auxiliary qubits for the three considered assertions, while QECA always needs to use additional CNOT gates and auxiliary qubits.
This reason is that QECA always measures auxiliary qubits to indirectly probe the qubit information. So that additional CNOT gates are always required to couple the auxiliary qubits with existing qubits.
This design significantly increases the implementation cost when comparing with \myAssertionName.



%

%
To summarize, we demonstrate the complete assertion-injected circuit for a quantum program of Shor's algorithm and the implementation details of the assertions.
We compare the implementation cost between \myAssertionName~and QECA to show that \myAssertionName~has lower cost for the limited assertions that are supported by both assertion schemes.



\subsection{HHL algorithm}
In the first example of Shor's algorithm, we focus the assertion implementation on a concrete circuit example and compare against other assertions due to the simplicity of the intermediate states.
In the next HHL algorithm example, we will have non-trivial assertions that are not supported in the baselines and demonstrate how to apply the techniques introduced in Section~\ref{sec:transformation}.

\begin{figure}[t]
	\centering
	\begin{align*}
	&p := |0\>^{\otimes n};\ q := |0\>^{\otimes m};\ r := |0\>;\\
	&{\rm\bf while\ }M[r] = 1 {\ \rm\bf do\ } \\
	&\qquad \ass(p,r; P); \\
	&\qquad q := |0\>^{\otimes m};\ q := U_b[q];\\ &\qquad p := H^{\otimes n}[p];  p,q := U_f[p,q];\\ &\qquad p := {\rm QFT}^{-1}[p]; \\ &\qquad \ass(p; S);  \\
	&\qquad p,r := U_c[p,r]; \ p := {\rm QFT}[p]; \\ &\qquad p,q := U_f^{\dag}[p,q]; \ p := H^{\otimes n}[p];  \\ &\qquad \ass(p,q,r; R); \\
	&{\rm\bf od\ } \\
	&\ass(q; Q);
	\end{align*}
	\vspace{-10pt}
	\caption{HHL algorithm program with assertions}\label{fig:hhl}
\end{figure}

The HHL algorithm was proposed for solving linear systems of equations~\cite{harrow2009quantum}. 
Given a matrix $A$ and a vector $\vec{b}$, the algorithm produces a quantum state $|x\>$ which is corresponding to the solution $\vec{x}$ such that $A\vec{x}=\vec{b}$. 
It is well-known that the algorithm offers up to an exponential speedup over the fastest classical algorithm if $A$ is sparse and has a low condition number $\kappa$.

\begin{figure*}[t]
\centering
\includegraphics[width=2.0\columnwidth]{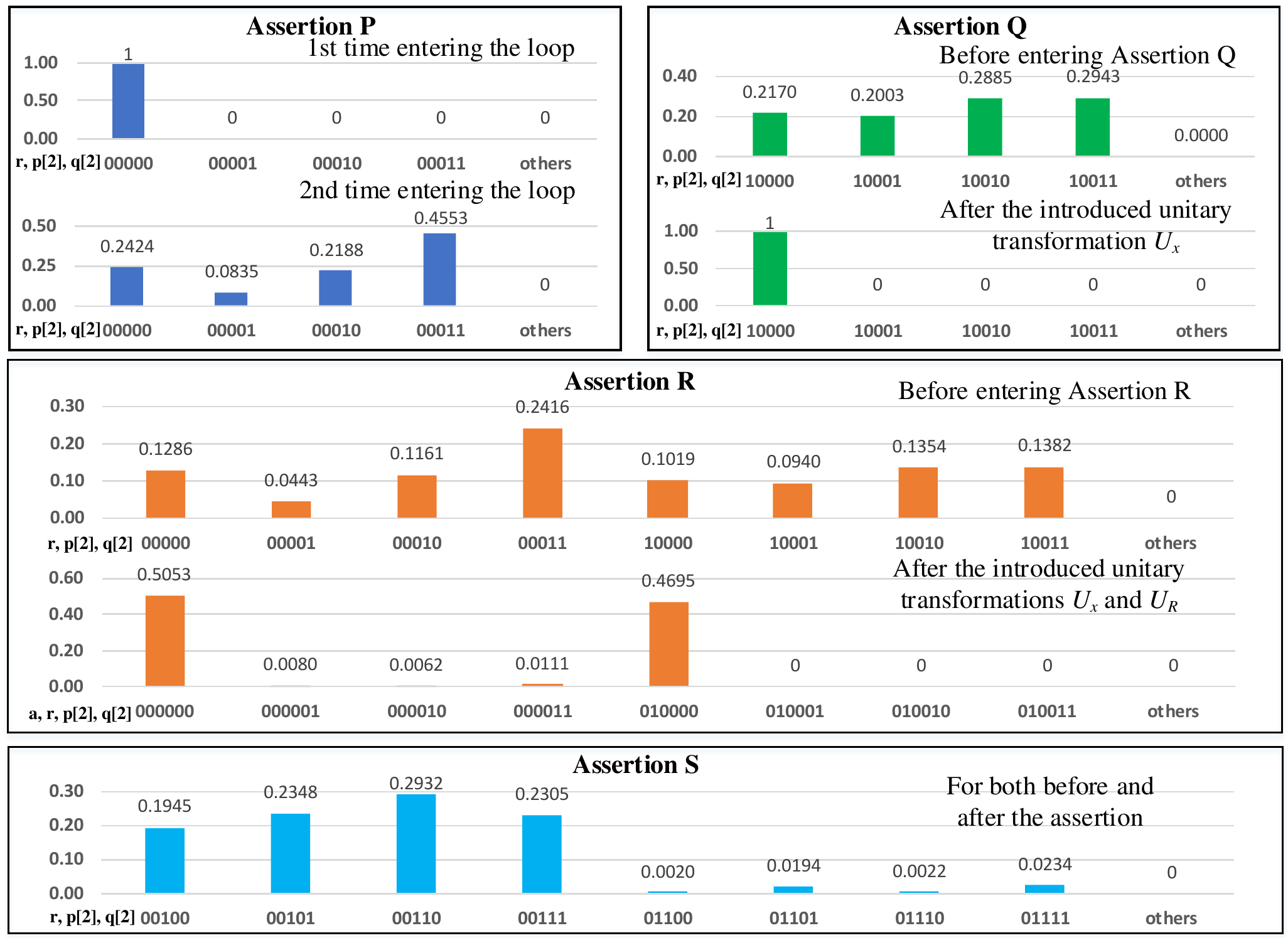}
\vspace{-12pt}
\captionsetup{justification=centering}
\caption{Numerical simulation results for the states around the assertions in HHL algorithm}
\vspace{-10pt}
\label{fig:hhl_result}
\end{figure*}

\subsubsection{HHL program}
The HHL algorithm has been formulated with the quantum {\bf while}-language in~\cite{zhou2019applied} and we adopt the assumptions and symbols there.
Briefly speaking, $A$ is a Hermitian and full-rank matrix with dimension $N = 2^m$, which has the diagonal decomposition $A = \sum_{j=1}^N\lambda_j|u_j\>\<u_j|$ with corresponding eigenvalues $\lambda_j$ and eigenvectors  $|u_j\>$.
We assume for all $j$, $\delta_j = \frac{\lambda_jt_0}{2\pi}\in\mathbb{N}^+$ and set $T = 2^n = \lceil\max_j\delta_j\rceil$, where $t_0$ is a time parameter to perform unitary transformation $U_f$. Moreover, the input vector $\vec{b}$ is presumed to be unit and corresponding to state $|b\>$ with the linear combination $|b\> = \sum_{j=1}^N\beta_j|u_j\>$. It is straightforward to find the solution state $|x\> = c\sum_{j=1}^N\frac{\beta_j}{\lambda_j}|u_j\>$ where $c$ is for normalization.

The HHL program has three registers $p,q,r$ which are $n,m,1$-qubit systems and used as the control system, state system, and indicator of while loop, respectively. For details of unitary transformations $U_b,U_f$ and ${\rm QFT}$ and measurement $M$, please refer to 
\cite{zhou2019applied,harrow2009quantum}.

\subsubsection{Debugging scheme for HHL program}
We introduce the debugging scheme for the HHL program shown in Figure \ref{fig:hhl}. 
The projections $P,Q,S,R$ are defined as follows:
\begin{align*}
&P = \qbit{0}{p} \otimes \qbit{0}{r};\ Q = \qbit{x}{q}; \ S = \supp\left(\sum_{j=1}^N\qbit{\delta_j}{p}\right) \\
&R = \qbit{0}{p} \otimes (\qbit{x}{q}\otimes\qbit{1}{r} + I_q\otimes \qbit{0}{r}). 
\end{align*}
Projection $R$ is across all qubits while $P$ is focused on register $p,r$ and $Q$ is focused on the output register $q$. These projections can be implemented using the techniques introduced in Section~\ref{sec:transformation}; more precisely:
\begin{enumerate}
	\item Implementation of $\ass(p,r; P)$:
		
		measure register $p$ and $r$ directly to see if the outcomes are all ``0'';
		
	\item Implementation of $\ass(q; Q)$:
		
		apply $U_x$ on $q$; (additional unitary transformation in Section~\ref{sec:unitarytransformation})
		
		measure register $q$ and check if the outcome is ``0'';
		
		apply $U_x^\dag$ on $q$;
		
	\item Implementation of $\ass(p,q,r; R)$:
		
		measure register $p$ directly to see if the outcome is ``0'';
		
		introduce an auxiliary qubit $a$, initialize it to $|0\>$; (auxiliary qubit in Section~\ref{sec:auxiliaryqubit})
		
		apply $U_x$ on $q$ and $U_R$ on $r,q,a$;
		
		measure register $a$ and check if the outcome is ``0''; (combining assertions in Section~\ref{sec:combining})
		
		apply $U_R^\dag$ on $r,q,a$ and $U_x^\dag$ on $q$;
\end{enumerate}
where $U_x$ is defined by $U_x|x\> = |0\>$
and $U_R$ is defined by
$$U_R\qbit{1}{r}\otimes\qbit{i}{q}\otimes\qbit{k}{a} = \qbit{1}{r}\otimes\qbit{i}{q}\otimes\qbit{k\oplus1}{a}$$
for $i\ge1$ and $k=1,2$
and unchanged otherwise. 

We need to pay more attention to $\ass(p; S)$. The most accurate predicate here is
$$S^\prime = \sum_{j,j^\prime=1}^N \beta_j\overline{\beta}_{j^\prime}|\delta_j\>_p\<\delta_{j^\prime}|\otimes |u_j\>_q\<u_{j^\prime}|\otimes \qbit{0}{r}$$
which is a highly entangled projection over register $p$ and $q$. 
As discussed in Section \ref{sec:efficient-implementation}, in order to avoid the hardness of implementing $S^\prime$, we introduce $S = \supp(\tr_{q,r}(S^\prime))$ which  is  the local projection of $S^\prime$ over $p$. Though $\ass(p; S)$ is strictly weaker than original $\ass(p,q,r; S^\prime)$, it can be efficiently implemented and partially test the state.



\subsubsection{Numerical simulation results}
For illustration, we choose $m=n=2$ as an example.
Then the matrix $A$ is $4\times4$ matrix and $b$ is $4\times 1$ vector. 
We first randomly generate four orthonormal vectors for $\ket{u_{j}}$ and then select $\delta_{j}$ to be either 1 or 3.
Such configuration will demonstrate the applicability of all four techniques in Section~\ref{sec:transformation}.
Finally, $A$ and $b$ are generated as follows. 
$$A = 
\begin{bmatrix}
1.951 & -0.863 & 0.332 & -0.377 \\
    -0.863 & 2.239 & -0.011 & -0.444 \\
    0.332 & -0.011 & 1.301 & -0.634 \\
    -0.377 & -0.444 & -0.634 & 2.509 
\end{bmatrix}
, b =
\begin{bmatrix}
    -0.486 \\
    -0.345 \\
    -0.494 \\
    -0.633 
\end{bmatrix}
$$

\textbf{Assertion Coverage:}
We have four assertions, labeled $P$, $Q$, $R$, and $S$, for the HHL program. 
Only $P$ is for a classical state and supported by the Stat and QECA.
$Q$, $R$, and $S$ are more complex and not supported by the baseline assertions.

Figure~\ref{fig:hhl_result} shows the amplitude distribution of the states during the execution of the four assertions and each block corresponds to one assertion.
Since our experiments are performed in simulation, we can directly obtain the state vector $\ket{\psi}$. 
The X-axis represents that basis states of which the amplitudes are not zero.
The Y-axis is the probability of the measurement outcome.
Each histogram represents the probability distribution across different computational basis states.
This probability 
is be calculated by $\left\Vert\bra{\psi}\ket{x}\right\Vert^{2}$, where $\ket{x}$ is the corresponding basis state.
The texts over the histograms represent the program locations where we record each of the states.


\textbf{Assertion P} is at the beginning of the loop body. 
The predicate is $P=\ket{000}_{r,p}\bra{000}$, which means that the quantum registers $r$ and $p$ should always be in state $\ket{0}$ and $\ket{00}$, respectively, at the beginning of the loop body. 
Figure~\ref{fig:hhl_result} shows that when the program enter the loop $D$ at the first and second time, the assertion is satisfied and the quantum registers $r$ and $p$ are 0.

\textbf{Assertion Q} is at the end of the program. 
Figure~\ref{fig:hhl_result} shows that there are non-zero amplitudes at 4 possible measurement outcomes at the assertion location. But after the applied unitary transformation, the only possible outcome is $10000$. 
Such an assertion is hard for Stat and QECA to describe but it is easy to define this assertion using projection in \myAssertionName.

\textbf{Assertion R} is at the end of the loop body.
Figure~\ref{fig:hhl_result} confirms that the basis states with non-zero amplitudes are in the subspace defined by the projection in assertion R.
Its projection implementation involves the techniques of combining assertions and using auxiliary qubits.
Such complex predicates cannot be defined in Stat and QECA while \myAssertionName~can implement and check it.

\textbf{Assertion S} is in the middle of the loop body.
At this place the state is highly entangled as mentioned above and directly implementing this projection will be expensive.
We employ the local projection technique in Section~\ref{sec:efficient-implementation}. 
Since $\delta_{j}$s are selected to be either 1 or 3, the projection $S$ becomes $\qbit{01}{p}+\qbit{11}{p}$.
This simple form of local projection that can be easily implemented.
Figure~\ref{fig:hhl_result} confirms that the tested highly entangled state is not affected in this local projective measurement.

To summarize, we design four assertions for the program of HHL algorithm. 
Among them, only $P$ can be defined in Stat and QECA. 
The remaining three assertions, which cannot be defined in Stat or QECA, demonstrate that \myAssertionName~assertions 
can better test and debug realistic quantum algorithms. 


\section{Discussion}\label{sec:discussion}

Program testing and debugging have been investigated for a long time because it reflects the practical application requirements for reliable software.
Compared with its counterpart in classical computing, quantum program testing and debugging are still at a very early stage.
Even the basic testing and debugging approaches (e.g., assertions) are not yet available or well-developed for quantum programs.
This paper made efforts towards practical quantum program runtime testing and debugging through studying how to design and implement effective and efficient quantum program assertions. 
Specifically, we select projections as predicates in our assertions because of the logical expressive power and efficient runtime checking property.
We prove that quantum program testing with projection-based assertion is statistically effective.
Several techniques are proposed to implement the projection under machine constraints.
To the best of our knowledge, this is the first runtime assertion scheme for quantum program testing and debugging with such flexible predicates, efficient checking, and formal effectiveness guarantees.
The proposed assertion technique would benefit future quantum program development, testing, and debugging.

Although we have demonstrated the feasibility and advantages of the proposed assertion scheme, several future research directions can be explored as with any initial research.

\textbf{Projection Implementation Optimization:}
We have shown that our assertion-based debugging scheme can be implemented with several techniques in Section~\ref{sec:debugging} and demonstrated concrete examples in Section~\ref{sec:casestudy}.
However, further optimization of the projection implementation is not yet well studied.
One assertion can be split into several sub-assertions, but different sub-assertion selections would have different implementation overhead.
We showed that one auxiliary qubit is enough but employing more auxiliary qubits may yield fewer sub-assertions.
For the circuit implementation of an assertion, the decomposition of the assertion-introduced unitary transformations can be optimized for several possible objectives, e.g., gate count, circuit depth.
A systematic approach to generate optimized assertion implementations is thus important for more efficient assertion-based quantum program debugging in the future.

\textbf{More Efficient Checking:}
Assertions for a complicated highly entangled state may require significant effort for its precise implementation.
However, the goal of assertions is to check if a tested state satisfies the predicates rather than to prove the correctness of a program.
It is possible to trade-in checking accuracy for simplified assertion implementation by relaxing the constraints in the predicates.
Local projection can be a solution to approximate a complex projective measurement as we discussed in Section~\ref{sec:efficient-implementation} and demonstrated in one of the assertions for the HHL algorithm in Section~\ref{sec:casestudy}.
However, the degree of predicate relaxation and its effect on the robustness of the assertions 
in realistic erroneous program debugging need to be studied.
Other possible directions, like non-demolition measurement~\cite{braginsky1980quantum}, are also worth exploring.




\section{Related Work}\label{sec:relatedwork}

This paper explores runtime assertion schemes for testing and debugging a quantum program on a quantum computer.
In particular, the efficiency and effectiveness of our assertions come from the application of projection operators.
In this section, we first introduce other existing runtime quantum program testing schemes, which are the closest related work, and then briefly discuss other quantum programming research involving projection operators.



\subsection{Quantum program assertions}
Recently, two types of assertions have been proposed for debugging on quantum computers.
Huang and Martonosi proposed quantum program assertions based on statistical tests on classical observations~\cite{huang2019statistical}.
For each assertion, the program executes from the beginning to the place of the injected assertion followed by measurements. This process is repeated many times to extract the statistical information about the state. 
The advantage of this work is that, for the first time, assertion is used to reveal bugs in realistic quantum programs and help discover several bug patterns.
But in this debugging scheme, each time only one assertion can be tested due to the destructive measurements. 
Therefore, the statistical assertion scheme is very time consuming. 
\myAssertionNameSpace circumvents this issue by choosing to use projective assertions.

Liu \textit{et al.} further improved the assertion scheme by proposing dynamic assertion circuits inspired by quantum error correction~\cite{liu2020quantum}.
They introduce ancilla qubits and indirectly collect the information of the qubits of interest.
The success rate can also be improved since some unexpected states can be detected and corrected in the noisy scenarios.
However, their approach requires manually designed transformation circuits and cannot be directly extended to more general cases. 
Their transformation circuits rely on ancilla qubits, which will increase the implementation overhead as discussed in Section~\ref{sec:shor}.

Moreover, both of these assertion schemes can only inspect very few types of states that can be considered as some special cases of our proposed projection based assertions, leading to limited applicability. 
In summary, our assertion and debugging schemes outperform existing assertion schemes~\cite{liu2020quantum,huang2019statistical} in terms of expressive power, flexibility, and efficiency.

\subsection{Quantum programming language research with projections}

Projection operators have been used in logic systems and static analysis for quantum programs.
All projections in (the closed subspaces of) a Hilbert space form an orthomodular lattice~\cite{kalmbach1983orthomodular}, which is the foundation of the first Birkhoff-von Neumann quantum logic~\cite{birkhoff1936logic}.
After that, projections were employed to reason about~\cite{brunet2004dynamic} or develop a predicate transformer semantics~\cite{ying2010predicate} of quantum programs.
Recently, projections were also used in other quantum logics for verification purposes~\cite{unruh2019quantum, zhou2019applied, yu2019quantum}. 
Orthogonal to these prior works, this paper proposes to use projection-based predicates in assertion, targeting runtime testing and debugging rather than logic or static analysis.

\section{Conclusion}\label{sec:conclusion}

The demand for bug-free quantum programs calls for efficient and effective debugging scheme on quantum computers.
This paper enables assertion-based quantum program debugging by proposing \myAssertionName, a projection-based runtime assertion scheme.
In \myAssertionName, predicates in the \textbf{assert} primitives are projection operators, which can significantly increase the expressive power and lower the assertion checking overhead compared with existing quantum assertion schemes.
We study the theoretical foundations of quantum program testing with projection-based assertions to rigorously prove its effectiveness and efficiency.
We also propose several transformations to make the projection-based assertions executable on  measurement-restricted quantum computers.
The superiority of \myAssertionName~is demonstrated by its applications to inject and implement assertions for two well-known sophisticated quantum algorithms.


\bibliographystyle{plain}
\bibliography{main}

\clearpage
\appendix
\section{Definition of the unitary transformations used in this paper}\label{appendix:unitary}
Single-qubit gate:
$$H {\rm ~(Hadamard)} = \frac{1}{\sqrt{2}}\left[\begin{array}{cc}1 & 1\\ 1& -1\end{array}\right],\quad X = \left[\begin{array}{cc} 0 & 1\\ 1 & 0 \end{array}\right]$$
Two-qubit gate CNOT(Controlled-NOT, Controlled-X):
\begin{align*}
&{\rm CNOT} = \qbit{0}{}\otimes I_2 + \qbit{1}{}\otimes X = \left[\begin{array}{cccc}1 & 0 & 0 & 0 \\ 0 & 1 & 0 & 0 \\ 0 & 0 & 0 & 1 \\ 0 & 0 & 1 & 0 \end{array}\right] \\
&{\rm Swap} = \left[\begin{array}{cccc}1 & 0 & 0 & 0 \\ 0 & 0 & 1 & 0 \\ 0 & 1 & 0 & 0 \\ 0 & 0 & 0 & 1\end{array}\right]
\end{align*}
Three-qubit gate Toffoli:
\begin{align*}
{\rm Toffoli} &= \qbit{0}{}\otimes I_4 + \qbit{1}{}\otimes {\rm CNOT} \\
&= \left[
\begin{array}{cccccccc}
1 & 0 & 0 & 0 & 0 & 0 & 0 & 0 \\ 
0 & 1 & 0 & 0 & 0 & 0 & 0 & 0 \\
0 & 0 & 1 & 0 & 0 & 0 & 0 & 0 \\
0 & 0 & 0 & 1 & 0 & 0 & 0 & 0 \\
0 & 0 & 0 & 0 & 1 & 0 & 0 & 0 \\
0 & 0 & 0 & 0 & 0 & 1 & 0 & 0 \\
0 & 0 & 0 & 0 & 0 & 0 & 0 & 1 \\
0 & 0 & 0 & 0 & 0 & 0 & 1 & 0 \\
\end{array}\right] \\
\end{align*}
Three-qubit gate Fredkin~(Controlled-Swap, CSwap):
\begin{align*}
{\rm Fredkin} &= \qbit{0}{}\otimes I_4 + \qbit{1}{}\otimes {\rm Swap} \\
&= \left[
\begin{array}{cccccccc}
1 & 0 & 0 & 0 & 0 & 0 & 0 & 0 \\ 
0 & 1 & 0 & 0 & 0 & 0 & 0 & 0 \\
0 & 0 & 1 & 0 & 0 & 0 & 0 & 0 \\
0 & 0 & 0 & 1 & 0 & 0 & 0 & 0 \\
0 & 0 & 0 & 0 & 1 & 0 & 0 & 0 \\
0 & 0 & 0 & 0 & 0 & 0 & 1 & 0 \\
0 & 0 & 0 & 0 & 0 & 1 & 0 & 0 \\
0 & 0 & 0 & 0 & 0 & 0 & 0 & 1 \\
\end{array}\right] \\
\end{align*}

\newpage
\section{Proof of the theorems, propositions, and lemmas}\label{appendix:proof}

\subsection{Proof of Theorem~\ref{theorem:feasibility}}
\label{proof:theorem:feasibility}

\textbf{Theorem:} \textit{Suppose we repeatedly execute $S^\prime$ (with $l$ assertions) with input $\rho$ and collect all the error messages.}
\begin{enumerate}
    \item \textit{(Posterior) If an error message occurs in $\ass(\overline{q}_m;P_m)$, we conclude that subprogram $S_{m}$ is not correct, i.e., with the input satisfying  precondition $P_{m-1}$, after executing $S_{m}$, the output can violate postcondition $P_m$.}
    
	\item \textit{(Posterior) If no error message is reported after executing $S^{\prime}$ for $k$ times ($k \gg l^2$), we claim that program $S$ is close to the bug-free standard program; more precisely, with confidence level $95\%$, 
    \begin{enumerate}
    	\item the confidence interval of $\min_{S_{\rm std}}D\left(\sem{S}(\rho),\sem{S_{\rm std}}(\rho)\right)$ is
    	$\left[0,\frac{0.9l+\sqrt{l}}{\sqrt{k}}\right],$
    	\item the confidence interval of $\max_{S_{\rm std}}F\left(\sem{S}(\rho),\sem{S_{\rm std}}(\rho)\right)$ is
    	$\left[\cos\frac{0.9l+\sqrt{l}}{\sqrt{k}},1\right],$
    \end{enumerate} 
	where the minimum (maximum) is taken over all bug-free standard program $S_{\rm std}$ that satisfies all assertions with input $\rho$.
	}
\end{enumerate}
\textit{Moreover, within one testing execution, if the program $s_{m}$ is not correct but $\ass(\overline{q}_{m};P_{m})$ is passed, then follow-up assertion $\ass(\overline{q}_{m+1};P_{m+1})$ is still effective in checking the program $S_{m+1}$.}

\begin{proof}
The proof has three parts.

\noindent$\bullet$ Error message occurred in $\ass(\overline{q}_{m};P_{m})$. 

\noindent Obviously, no error message occurred in $\ass(\overline{q}_{m-1};P_{m-1})$, which ensures that the current state $\rho$ after the assertion $\ass(\overline{q}_{m-1};P_{m-1})$ indeed satisfies $\rho\models P_{m-1}$. 
 
 After executing the subprogram $S_m$, the state becomes $\sem{S_m}(\rho)$. The error message occurred in $\ass(\overline{q}_{m};P_{m})$ indicates that $\sem{S_m}(\rho)\not\models P_m$, which implies subprogram $S_{m}$ is not correct, i.e., with the input satisfying precondition $P_{m-1}$, after executing $S_{m}$, the output can violate postcondition $P_m$.

\noindent$\bullet$ No error message is reported. 

\noindent We assume that for the original program $S$, the state before and after $S_m$ is $\rho_{m-1}$ and $\rho_m$ for $1\le m\le l$; and for the debugging scheme $S^\prime$, the state after $\ass(\overline{q}_m;P_m)$ is $\rho_m^\prime$  for $1\le m\le l$ and set $\rho_0^\prime = \rho$. 

We first show the trace distance $D$ and angle $A$ (distance defined by fidelity\footnote{Formally, $A(\rho,\sigma)\triangleq\arccos(F(\rho,\sigma))$.}) of $\sem{S_m}(\rho_{m-1}^\prime)$ and $\rho_m^\prime$. Realize that, the $k$ executions of assertion $\ass(\overline{q}_m;P_m)$ are $k$ independent Bernoulli trials with success (report error message) probability $\epsilon_m = 1-\tr(P_m\sem{S_m}(\rho_{m-1}^\prime))$. With the result that there is no success in $k$ trials, we here use the commonly used methods of binomial proportion confidence interval, the Clopper-Pearson interval\footnote{It is also called the 'exact' confidence interval, as it is based on the cumulative probabilities of the binomial distribution.} ~\cite{CP34} to estimate the actual value of probability $\epsilon_m$. The confidence interval (CI) of $\epsilon_m$ is $\left(0,1-\left(\frac{\alpha}{2}\right)^\frac{1}{k}\right)$ with confidence level $1-\alpha$; in other words, based on the trial results, we may draw the distribution of possible actual value, which is expressed as:
\begin{align*}
&\Pr(a\le \epsilon_m\le b) = \int_{a}^{b}f_X(x) {\rm d}x, \\
&f_X(x) = {\rm Beta}(1,k) = k(1-x)^{k-1}.
\end{align*}
According to Lemma \ref{lemma:gentle}, we know that: 
\begin{align*}
&D(\sem{S_m}(\rho_{m-1}^\prime),\rho_m^\prime) \le \epsilon_m+\sqrt{\epsilon_m(1-\epsilon_m)} =: Y_m \\
&A(\sem{S_m}(\rho_{m-1}^\prime),\rho_m^\prime) \le {\rm arccos}(\sqrt{1-\epsilon_m}) =: Z_m
\end{align*}
Some properties of $Y_m$ and $Z_m$ are listed below\footnote{As we focused on the summation of values, we choose the mean of possible actual value as the center estimate, rather than the center of CI. As a consequence, the standard deviation is corrected to the distance of center estimate and right-bounded of CI.}:

\begin{center}
\begin{tabular}{|c|c|c|c|}
	\hline
	& center estimate & CI \\
	\hline
	$Y_m$ & $\frac{1}{k+1}+\sqrt{\frac{\pi}{4k+3}}$ & $\left[0, \frac{\beta}{k} + \sqrt{\frac{\beta}{k}}\right]$ \\
	\hline
	$Z_m$ & $\sqrt{\frac{\pi}{4k+3}}$ & $\left[0,  \sqrt{\frac{\beta}{k}}\right]$ \\
	\hline
\end{tabular}
\end{center}
with $\beta = -\ln(\alpha/2)$.

Next, we derive the following inequalities:
\begin{align*}
&D(\rho_{l},\rho_{l}^\prime) \\
\le\ &D(\rho_{l},\sem{S_l}(\rho_{l-1}^\prime))  + D\left(\sem{S_l}(\rho_{l-1}^\prime), \rho_l^\prime\right)\\
=\ &D(\sem{S_l}(\rho_{l-1}),\sem{S_l}(\rho_{l-1}^\prime))  + D\left(\sem{S_l}(\rho_{l-1}^\prime), \rho_l^\prime\right) \\
\le\ &D(\rho_{l-1},\rho_{l-1}^\prime)  + D\left(\sem{S_l}(\rho_{l-1}^\prime), \rho_l^\prime\right) \\
&\vdots \\
\le \ &\sum_{m=1}^lD\left(\sem{S_m}(\rho_{m-1}^\prime), \rho_m^\prime\right) \\
\le \ &\sum_{m=1}^l Y_m
\end{align*}
and similarly, 
\begin{align*}
A(\rho_{l},\rho_{l}^\prime) \le \sum_{m=1}^l Z_m
\end{align*}
using the fact that trace-preserving quantum operations (the semantic functions of terminating programs) are contractive for both $D$ and $A$. Note that all $Y_m$ are independent, so the estimate mean of $\sum_{m=1}^lY_m$ is 
$$\frac{l}{k+1}+l\sqrt{\frac{\pi}{4k+3}}$$
and the CI with confident level $1-\alpha$ is \footnote{The exact bound of CI is generally difficult to calculate. Given a set of $X_i$ with estimate mean ${\rm E}X_i$ and CI $({\rm E}X_i-w_i,{\rm E}X_i+w_i)$, a simpler way to estimate the CI of summation $\sum_iX_i$ is $\left(\sum_i{\rm E}X_i - \sqrt{\sum_iw_i^2}, \sum_i{\rm E}X_i + \sqrt{\sum_iw_i^2}\right)$, an interval centered at $\sum_i{\rm E}X_i$ with width $\sqrt{\sum_iw_i^2}$, similar to the behavior of standard deviation.}
$$\left[0, \frac{l}{k+1}+l\sqrt{\frac{\pi}{4k+3}} + \sqrt{l}\left(\frac{\beta}{k} + \sqrt{\frac{\beta}{k}} - \frac{1}{k+1}-\sqrt{\frac{\pi}{4k+3}}\right)\right].$$
Similarly, we can construct the CI of $\sum_{m=1}^lZ_m$:
$$\left[0, l\sqrt{\frac{\pi}{4k+3}} + \sqrt{l}\left(\sqrt{\frac{\beta}{k}} - \sqrt{\frac{\pi}{4k+3}}\right)\right].$$
If $k$ is large (e.g., greater than $100$) and choose $\alpha = 0.05$ (the confidence level is $95\%$), we may simplify above formula and conclude:
\begin{enumerate}
	\item The $95\%$ CI of $D(\rho_{l},\rho_{l}^\prime)$ is
	$$\left[0,\frac{0.9l+\sqrt{l}}{\sqrt{k}}\right],$$
	\item The $95\%$ CI of $F(\rho_{l},\rho_{l}^\prime)$ is
	$$\left[\cos\frac{0.9l+\sqrt{l}}{\sqrt{k}},1\right].$$
\end{enumerate}

Now, if we construct a sequence of subprograms $S_m^\prime$ which takes $\rho_{m-1}^\prime$ as input and output $\rho_m^\prime$, obviously $S_1^\prime;\cdots;S_l^\prime$ is a bug-free standard program (that passes all assertions with input $\rho$).
Therefore, we complete the proof.

\noindent$\bullet$ Even if some $S_m$ is not correct, if the execution of $S^\prime$ does not terminate at $\ass(\overline{q}_m;P_m)$, then the state after $\ass(\overline{q}_m;P_m)$ is changed and satisfies $P_m$, which is actually the correct input for testing $S_{m+1}$. Therefore, the rest of the execution is still good enough for debugging other errors.
\end{proof}



\subsection{Proof of Lemma~\ref{lemma:gentle}}\label{proof:lemma}

\textbf{Lemma: }\textit{
	For projection $P$ and density operator $\rho$, if $\tr(P\rho)\ge1-\epsilon$, then
	\begin{enumerate}
		\item $D\Big(\rho,\frac{P\rho P}{\tr(P\rho P)}\Big)\le\epsilon+\sqrt{\epsilon(1-\epsilon)}.$
		\item $F\Big(\rho,\frac{P\rho P}{\tr(P\rho P)}\Big)\ge\sqrt{1-\epsilon}.$
	\end{enumerate}
}
 \begin{proof}
 	1. For pure state $|\psi\>$, we have:
 	\begin{align*}
 	\tr|P|\psi\>\<\psi| P^\perp| &= \tr\sqrt{P|\psi\>\<\psi| P^\perp P^\perp|\psi\>\<\psi|P} \\
 	&=\sqrt{\<\psi| P^\perp P^\perp|\psi\>}\tr\sqrt{P|\psi\>\<\psi|P} \\
 	&= \sqrt{\<\psi| P^\perp |\psi\>}\sqrt{\<\psi|P|\psi\>} \\
 	&= \sqrt{\tr(P|\psi\>\<\psi|)}\sqrt{\tr(P^\perp|\psi\>\<\psi|)}.
 	\end{align*}
 	Therefore, for any density operators $\rho$ with spectral decomposition $\rho = \sum_ip_i|\psi_i\>\<\psi_i|$, we have:
 	\begin{align*}
 	\tr|P\rho P^\perp| &= \tr|P\sum_ip_i|\psi_i\>\<\psi_i| P^\perp| \\
 	&\le \sum_ip_i\tr|P|\psi_i\>\<\psi_i| P^\perp| \\
 	&= \sum_i\sqrt{p_i\tr(P|\psi_i\>\<\psi_i|)}\sqrt{p_i\tr(P^\perp|\psi_i\>\<\psi_i|)} \\
 	&\le \sqrt{\sum_ip_i\tr(P|\psi_i\>\<\psi_i|)}\sqrt{\sum_ip_i\tr(P^\perp|\psi_i\>\<\psi_i|)} \\
 	&= \sqrt{\tr(P\rho)\tr(P^{\perp}\rho)}
 	\end{align*}
 	using the Cauchy-Schwarz inequality. Now, it is straightforward to have:
 	\begin{align*}
 	&D\Big(\rho,\frac{P\rho P}{\tr(P\rho P)}\Big) \\
 	=\ & \frac{1}{2}\tr\Big|P\rho P+P^\perp \rho P+P\rho P^\perp+P^\perp \rho P^\perp - \frac{P\rho P}{\tr(P\rho P)}\Big| \\
 	\le\ & \frac{1}{2}\tr|P\rho P|\Big|1-\frac{1}{\tr(P\rho P)}\Big| + \frac{1}{2}|P\rho P^{\perp}+P^{\perp}\rho P|  \\
 	&  + \frac{1}{2}|P^{\perp}\rho P^{\perp}|\\
 	\le\ & \frac{1}{2}(1-\tr(P\rho)) + \tr\big|P\sqrt{\rho}\sqrt{\rho}P^\perp\big| + \frac{1}{2}\tr((I-P)\rho) \\
 	\le\ & \frac{\epsilon}{2} + \sqrt{\tr(P\rho)\tr(P^{\perp}\rho)} + \frac{\epsilon}{2}\\
 	\le\ & \epsilon + \sqrt{\epsilon(1-\epsilon)}.
 	\end{align*}
 	The restriction of $P$ makes it a slightly stronger than the original one in \cite{Winter99}.
 	 	
 	\noindent 2. For pure state $|\psi\>$, we have:
 	\begin{align*}
 	F\left(|\psi\>\<\psi|,\frac{P|\psi\>\<\psi| P}{\tr(P|\psi\>\<\psi| P)}\right) &= \sqrt{\frac{\<\psi|P|\psi\>\<\psi|P|\psi\>}{\tr(P|\psi\>\<\psi| P)}} \\&= \sqrt{\tr(P|\psi\>\<\psi| P)}.
 	\end{align*}
 	Now, for any density operators $\rho$ with spectral decomposition $\rho = \sum_ip_i|\psi_i\>\<\psi_i|$, we have:
 	\begin{align*}
 	&F\left(\rho,\frac{P\rho P}{\tr(P\rho P)}\right) \\
 	=\ & F\left(\sum_ip_i|\psi_i\>\<\psi_i|,\sum_i\frac{p_i\tr(P|\psi_i\>\<\psi_i| P)}{\tr(P\rho P)}\frac{P|\psi_i\>\<\psi_i| P}{\tr(P|\psi_i\>\<\psi_i| P)}\right) \\
 	\ge &\sum_i\sqrt{p_i\frac{p_i\tr(P|\psi_i\>\<\psi_i| P)}{\tr(P\rho P)}}F\left(|\psi_i\>\<\psi_i|,\frac{P|\psi_i\>\<\psi_i| P}{\tr(P|\psi_i\>\<\psi_i| P)}\right) \\
 	= &\sum_i\frac{p_i\tr(P|\psi_i\>\<\psi_i| P)}{\sqrt{\tr(P\rho P)}} \\
 	= &\frac{\tr(P\rho P)}{\sqrt{\tr(P\rho P)}} \\
 	= &\sqrt{1-\epsilon}
 	\end{align*}
 	using strong concavity of the fidelity.
 \end{proof}

\subsection{Proof of Theorem~\ref{theorem:robust}}\label{proof:robust} 
\textbf{Theorem:}
\textit{Assume that all $\epsilon_i$ are small ($\epsilon_m \ll 1$). 
Execute $S^\prime$ for $k$ times ($k \gg l^2$) with input $\rho$, and we count $k_m$ for the occurrence of error message for assertion $\ass(\overline{q}_m, P_m)$. }
\begin{enumerate}
    \item \textit{The 95\% confidence interval of real $\varepsilon_m$ is $[w_m^-,w_m^+]$. Thus, with confidence 95\%, if $\epsilon_m<w_m^-$, we conclude $S_m$ is incorrect; and if $\epsilon_m>w_m^+$, we conclude $S_m$ is correct. Here, $w_m^-, w_m^+$ and $w_m^c$ are $B\left(\alpha, k_m+1, k - \sum_{i=1}^{m}k_i\right)$ with $\alpha = 0.025,0.975$ and $0.5$ respectively, where $B(P,A,B)$ is the $P$th quantile from a beta distribution with shape parameters $A$ and $B$.}
    \item \textit{If no segment is appeared to be incorrect, i.e., all $\epsilon_m\ge w_m^-$, then after executing the original program $S$ with input $\rho$, the output state $\sigma$ approximately satisfies $P_l$ with error parameter $\delta$, i.e., $\sigma\models_\delta P_l$, where $\delta = \sum_{m=1}^l\sqrt{w_m^c} + \sqrt{\sum_{m=1}^l(\sqrt{w_m^+}-\sqrt{w_m^c})^2}$. }
\end{enumerate}
\begin{proof}

The proof is similar to Appendix \ref{proof:theorem:feasibility}.

We assume that for the original program $S$, the state before and after $S_m$ is $\rho_{m-1}$ and $\rho_m$ for $1\le m\le l$; and for the debugging scheme $S^\prime$, the state after $\ass(\overline{q}_m;P_m)$ is $\rho_m^\prime$  for $1\le m\le l$ and set $\rho_0^\prime = \rho$. 

Realize that, the $k-\sum_{i=1}^{m-1}k_i$ executions of assertion $\ass(\overline{q}_m;P_m)$ are $k-\sum_{i=1}^{m-1}k_i$ independent Bernoulli trials with success (report error message) probability $\varepsilon_m = 1-\tr(P_m\sem{S_m}(\rho_{m-1}^\prime))$. With the result that there is $m_m$ success in $k-\sum_{i=1}^{m-1}k_i$ trials, we use the Clopper-Pearson interval to estimate the actual value of probability $\varepsilon_m$. Set confidence level 95\%, the CI $[w_m^-,w_m^+]$ is calculated by:
$$w_m^- = B\left(0.025, k_m+1, k - \sum_{i=1}^{m}k_i\right),\quad w_m^+ = B\left(0.975, k_m+1, k - \sum_{i=1}^{m}k_i\right),$$
where $B(P,A,B)$ is the $P$th quantile from a beta distribution with shape parameters $A$ and $B$.

\vspace{0.2cm}

\noindent\textit{Proof of }(1): If the desired $\epsilon_m$ is smaller than the lower bound $w_m^-$, i.e., with confidence 95\%, the real value of $\varepsilon_m$ is larger than $w_m^-$ and also $\epsilon_m$, the segment $S_m$ is incorrect. And if the desired $\epsilon_m$ is larger than the upper bound $w_m^+$, i.e., with confidence 95\%, the real value of $\varepsilon_m$ is smaller than $w_m^+$ and also $\epsilon_m$, the segment $S_m$ is correct when the input of $S$ is $\rho$ as the output approximately satisfies $P_m$ with error $\varepsilon_m$ less than $\epsilon_m$.

\vspace{0.2cm}

\noindent\textit{Proof of }(2): We set $w_m^c = B\left(0.5, k_m+1, k - \sum_{i=1}^{m}k_i\right)$. According to Lemma \ref{lemma:gentle}, we know that: 
\begin{align*}
D(\sem{S_m}(\rho_{m-1}^\prime),\rho_m^\prime) \le \varepsilon_m+\sqrt{\varepsilon_m(1-\epsilon_m)} =: Y_m
\end{align*}
Since $\varepsilon_m$ is a beta distribution and small (because $\epsilon_m\ge w_m^-$ and $\epsilon_m$ is small), one can prove that:
\begin{enumerate}
    \item The mean $\overline{Y_m}$ is smaller than $Y_m^c \triangleq w_m^c + \sqrt{w_m^c(1-w_m^c)}$;
    \item $\left[Y_m^- \triangleq w_m^- + \sqrt{w_m^-(1-w_m^-)}, Y_m^+ \triangleq w_m^+ + \sqrt{w_m^+(1-w_m^+)}\right]$ is also the 95\% CI of $Y_m$;
    \item $Y_m^+-Y_m^c > Y_m^c - Y_m^-$;
\end{enumerate}
and thus, it is possible to choose $Y_m^c$ as the center estimate and $Y_m^+-Y_m^c$ the standard deviation of CI.
As a result, the estimate mean of $\sum_{m=1}^lY_m$ is smaller than $\sum_{m=1}^lY_m^c$ and thus its CI is 
$$\left[ \sum_{m=1}^lY_m^c - \sqrt{\sum_{m=1}^l(Y_m^+-Y_m^c)^2},\ \sum_{m=1}^lY_m^c + \sqrt{\sum_{m=1}^l(Y_m^+-Y_m^c)^2}\right].$$
Recall that $D(\rho_{l},\rho_{l}^\prime) \le \sum_{m=1}^l Y_m$, and since $\varepsilon_m$ is small, we may ignore the infinitesimal of higher order and approximate the CI of $D(\rho_{l},\rho_{l}^\prime)$ as:
$$\left[\sum_{m=1}^l\sqrt{w_m^c} - \sqrt{\sum_{m=1}^l(\sqrt{w_m^+}-\sqrt{w_m^c})^2},\  \sum_{m=1}^l\sqrt{w_m^c} + \sqrt{\sum_{m=1}^l(\sqrt{w_m^+}-\sqrt{w_m^c})^2}\right].$$
Note that $\rho^\prime_l\models P_l$ since it is the post-measurement state, we conclude that the output $\rho_{l}$ of original program $S$ must approximately satisfy $P_l$ with an error at most $\delta \triangleq \sum_{m=1}^l\sqrt{w_m^c} + \sqrt{\sum_{m=1}^l(\sqrt{w_m^+}-\sqrt{w_m^c})^2}$.

\end{proof}

\subsection{Proof of Proposition~\ref{prop:combining}}\label{proof:combining}
\textbf{Proposition:}
	\textit{For projection $P$ with $\rank P \le 2^{n-1}$, there exist projections $P_1,P_2,\cdots,P_l$ satisfying $\rank P_i = 2^{n_i}$ for all $1\le i\le l$, such that 
	$P = P_1\cap P_2\cap\cdots\cap P_l$. Theoretically, $l=2$ is sufficient.}

\begin{proof}

After we diagonalize the projection $P$ with the form $U\Lambda U^\dag$, where the matrix form of $\Lambda$ is a diagonal matrix 
$$\Lambda = {\rm diag}(\underbrace{1,1,\cdots,1}_{{\rm rank}\ P},\underbrace{0,0,\cdots,0}_{2^n-{\rm rank}\ P}).$$
Choose following two diagonal matrices
\begin{align*}
&\Lambda_1 = {\rm diag}(\underbrace{1,\cdots,1}_{2^{n-1}},0,\cdots,0), \\
&\Lambda_2 = {\rm diag}(\underbrace{1,\cdots,1}_{{\rm rank}\ P},\underbrace{0,\cdots,0}_{2^{n-1}-{\rm rank}\ P},\underbrace{1,\cdots,1}_{2^{n-1}-{\rm rank}\ P},\underbrace{0,\cdots,0}_{{\rm rank}\ P}),
\end{align*}
which satisfy $\Lambda_1\cap\Lambda_2=\Lambda$ and ${\rm rank}\ \Lambda_1 = {\rm rank}\ \Lambda_2 = 2^{n-1}$.
Therefore, we set $P_1 = U\Lambda_1 U^\dag$ and  $P_2 = U\Lambda_2 U^\dag$ as desired.

\end{proof}

\end{document}